\documentclass[11pt,a4paper]{article}
\pdfoutput=1
\usepackage[T1]{fontenc}
\usepackage[utf8]{inputenc}
\usepackage{lmodern,microtype}
\usepackage[margin=25mm]{geometry}
\usepackage{amsmath,amssymb,amsthm,mathtools}
\usepackage{graphicx,booktabs}
\usepackage[colorlinks=true,linkcolor=blue,citecolor=blue,urlcolor=blue]{hyperref}
\usepackage{enumitem}
\hypersetup{pdftitle={Analytic Brill wave initial data with trapped surfaces},pdfauthor={Piotr Koc}}
\numberwithin{equation}{section}
\newtheorem{theorem}{Theorem}[section]
\newtheorem{proposition}[theorem]{Proposition}
\newtheorem{lemma}[theorem]{Lemma}
\theoremstyle{remark}\newtheorem{remark}[theorem]{Remark}
\newcommand{\R}{\mathbb R}
\newcommand{\D}{\mathcal D^{1,2}(\R^3)}
\newcommand{\dd}{\mathrm d}
\newcommand{\sech}{\operatorname{sech}}
\newcommand{\atanh}{\operatorname{artanh}}

\newcommand{\epsn}{\epsilon_n}

\title{Analytic Brill wave initial data with trapped surfaces}
\author{Piotr Koc\\[2pt]\small Independent researcher, Krak\'ow, Poland\\
\small\texttt{piotr@koc.pl}}
\date{12 October 2026}
\begin{document}
\maketitle
\begin{abstract}
Since Brill's work of 1959, the conformal factor of finite-amplitude Brill
waves, the axisymmetric and time-symmetric vacuum initial data on $\R^3$
that describe pure gravitational radiation, has been obtained only
numerically, up to the most recent studies. Here an explicit one-parameter
family of such data is constructed: the first explicit real-analytic,
non-trivial, asymptotically flat vacuum initial data with trivial topology
and a single end. The free metric function is a closed-form combination of
elementary functions and a dilogarithm. The data have the toroidal
conformal symmetry of Beig and Husa, and a change of variable of Natanzon
type reduces every mode equation to the Gauss hypergeometric equation, so
that the conformal factor is a series with explicit Gamma-function
coefficients. The quantities that have so far required numerical work are
thereby obtained exactly. A ground-state identity gives a sharp existence
criterion: complete, asymptotically flat data exist precisely for
$k<k_c=0.8048016772\ldots$, and the explicit series is identified with the
unique variational solution. The ADM mass is an absolutely convergent
digamma series, with $m/a=\pi k^2/10+O(k^3)$ for small amplitudes and an
explicit residue at its critical divergence. Sufficiently strong waves
contain trapped surfaces: it is proved that for $k$ close to $k_c$ the
rescaled conformal factor converges to an explicit function and explicit
tori and spheres become outer trapped, so that the data contain an
apparent horizon. A numerical search, restricted to equatorially symmetric
star-shaped surfaces, finds marginally trapped (minimal) spheres, which
appear in a fold near $k=0.7670$. No thin-torus or weak-field
approximation is used.
\end{abstract}

\section{Introduction}
Brill's axisymmetric, time-symmetric vacuum data provide a particularly simple
setting in which to study the energy and geometry of gravitational radiation
\cite{Brill}. In cylindrical coordinates the physical three-metric is
\begin{equation}\label{metric}
 g=\Phi^4\bigl[e^{2q}(\dd\rho^2+\dd z^2)+\rho^2\dd\varphi^2\bigr],
 \qquad K_{ij}=0.
\end{equation}
For axisymmetric $\Phi$, the vacuum constraint is
\begin{equation}\label{constraints}
 \Delta_3\Phi+V\Phi=0,\qquad V=\frac14\Delta_2q,
 \qquad\Delta_2=\partial_\rho^2+\partial_z^2.
\end{equation}
The function $q$ is freely prescribed, subject to the boundary conditions
that follow from regularity of the metric on the axis and from asymptotic
flatness \cite{Brill}:
\begin{equation}\label{bcq}
 q=0,\quad\partial_\rho q=0\quad\text{on the axis }\rho=0;\qquad
 q=O(r^{-2})\quad(r\to\infty),
\end{equation}
where $r^2=\rho^2+z^2$. The conformal factor must be positive and regular
on the axis, and
\begin{equation}\label{bcPhi}
 \partial_\rho\Phi=0\quad\text{at }\rho=0;\qquad
 \Phi=1+\frac m{2r}+O(r^{-2})\quad(r\to\infty),
\end{equation}
where $m$ is the ADM mass. The data considered below are, in addition,
symmetric under $z\mapsto-z$. Although the equation for $\Phi$ is linear
for fixed $q$, the resulting physical metric is not a linearized gravitational
field. For finite amplitudes the conformal factor of Brill data on $\R^3$
has so far been obtained only numerically, starting with Eppley
\cite{Eppley} and continuing with a variety of methods
\cite{Gentle,KKS,Korobkin,deOliveira}, in particular in studies of the
collapse of gravitational waves
\cite{GarfinkleDuncan,Sorkin,Hilditch13,Hilditch16,Hilditch17}.
This is still the state of the art. In recent work the conformal factor
is computed numerically for Brill waves with slow fall-off \cite{BGW25}
and for data with prescribed asymptotics \cite{BGIMW26}; the dependence
of the ADM mass on the wave amplitude is computed numerically and
fitted \cite{AranhaOliveira}; and the range of amplitudes in which the
constraint has a solution, together with the occurrence of apparent
horizons, is mapped numerically \cite{MastersonHobill}.
General properties of such data were derived by indirect arguments
\cite{Wheeler,Holz,BeigOM}, and existence itself follows from general
theory \cite{CantorBrill}.

A classical question about such data is whether pure gravitational
radiation, when sufficiently concentrated, produces trapped surfaces.
Eppley \cite{Eppley} found apparent horizons in strong Brill waves
numerically; Beig and \'O Murchadha \cite{BeigOM} proved that trapped
surfaces must form as a critical amplitude is approached; the Penrose
inequality was tested on such configurations \cite{KKS}; the parameter
regions with and without apparent horizons were mapped numerically
\cite{MastersonHobill}; and trapped
surfaces of toroidal topology were studied in \cite{Husa96,KarkowskiTori}.
The explicit family constructed here allows this question to be examined
on exact data. Section~\ref{sec_proof} proves that the data contain outer
trapped surfaces, and hence an apparent horizon, for $k$ sufficiently
close to $k_c$; Section~\ref{sec_trapped} locates marginally trapped
spheres numerically, together with the outer trapped surfaces they
enclose.

Beig and Husa \cite{BeigHusa} introduced a class with a
$U(1)\times U(1)$ group of conformal isometries. Fourier decomposition reduces
the constraint to independent ordinary differential equations. Their
axisymmetric subclass contains the data studied here. Husa subsequently developed
numerical constructions \cite{Husa98}. His toroidal inner-boundary problem also
has an explicit limiting solution, expressed through Bessel functions, but that
limit develops a curvature singularity \cite{Husa96}. Analytic linearized Brill
data have been constructed separately \cite{LinearBrill}. These results must be
distinguished from an everywhere regular, finite-amplitude family on all of
$\R^3$.

The present contribution is an explicitly solvable analytic family; the
separation of variables it relies on is the one just described. Its radial
equations belong to the hypergeometric class described by Natanzon
\cite{Natanzon}. Regularity fixes the remaining potential parameter. The
connection formula then determines every mode normalization and the mass.
An earlier toroidal construction by the author \cite{Koc21} used a thin-torus
approximation; the present construction does not use that approximation.

Throughout, ``explicit'' has a definite meaning: $q$ is a closed-form
combination of elementary functions and a dilogarithm, elementary for
$k=-3$, and $\Phi$ is a series of
hypergeometric functions whose coefficients are closed-form Gamma-function
expressions. No earlier example of real-analytic,
non-trivial, asymptotically flat vacuum data of this kind on $\R^3$ is
available. The classical explicit vacuum data, namely the Schwarzschild
metric in isotropic coordinates and the Brill--Lindquist and Misner
families \cite{BrillLindquist,Misner}, are likewise analytic and free of
singularities, but they live on manifolds with several asymptotic ends:
the black holes are built in through the topology. The present data are
pure gravitational waves on a manifold with trivial topology and a single
end.

Every quantity addressed by the numerical studies cited above is
obtained here in exact form, for the seed function \eqref{qclosed}. The
conformal factor is the explicit series \eqref{Phiseries}. The ADM mass
is the convergent series \eqref{mass} on the whole interval of existence,
with the exact weak-field coefficient \eqref{weakmass} and the exact pole
\eqref{masspole}; no fit is involved. The limit of existence is not
bracketed numerically: it is the root $k_c$ of the elementary equation
\eqref{kceq}, and both existence for every $k<k_c$ and non-existence
beyond it are proved (Theorem~\ref{existence}). The formation of trapped
surfaces near this limit is proved as well, with the limiting geometry in
closed form (Theorem~\ref{trapped} and Remark~\ref{limitgeometry}). The
family can therefore serve as an exact benchmark for the numerical
construction of Brill data, for mass computations and for horizon
finders.

The result is contained in a few formulas. The metric \eqref{metric} is
determined by two functions. The free function $q$ is given by
\eqref{qclosed}, with the toroidal coordinate $\nu(\sigma)$ from
\eqref{nusigma} and the constant $S(k)$ from \eqref{Sdef}; its elementary
form at $k=-3$ is \eqref{qminus3}. The conformal factor $\Phi$ is the
series \eqref{Phiseries} with the modes \eqref{explicit_modes} and the
parameters \eqref{ab}. The ADM mass is \eqref{mass}, and the family exists
for $k<k_c$ with $k_c$ defined by \eqref{kceq}.

The argument is organized so that existence precedes reconstruction. A
positive auxiliary function proves coercivity and identifies the exact
critical parameter. Fourier analysis of the resulting global solution then
justifies the explicit series, including its normalization and its mass.
This avoids inferring global regularity from pointwise numerical checks of
individual modes. The existence of trapped surfaces near the critical
parameter is proved from the same explicit modes
(Theorem~\ref{trapped}); the numerical trapped-surface calculations are
stated with their restricted geometric scope. Numerical checks of the formulas and a
description of the accompanying code are collected in
Appendix~\ref{app_verification}.

\section{Toroidal separation and endpoint conditions}
This section reduces the constraint \eqref{constraints} to ordinary
differential equations. Fix a length scale $a>0$. In the meridional half-plane $\rho\ge0$
introduce coordinates $(\nu,u)$ by
\begin{equation}\label{coords}
 \rho=\frac{a\sinh\nu}{d},\qquad
 z=\frac{a\sin u}{d},\qquad
 d=\cosh\nu-\cos u,
 \qquad 0\le\nu<\infty,\quad -\pi<u\le\pi.
\end{equation}
Together with the azimuth $\varphi$ they are toroidal coordinates on
$\R^3$. Equivalently, $\rho+iz=a\coth[(\nu-iu)/2]$. The coordinate loci
are as follows. The axis $\rho=0$ is $\nu=0$. The circle $\rho=a$, $z=0$
is approached as $\nu\to\infty$. Spatial infinity is the single point
$(\nu,u)=(0,0)$, where $d=0$, and the origin is $(\nu,u)=(0,\pi)$. The
equatorial plane $z=0$ is $u=0$ outside the circle and $u=\pi$ inside
it, and $z$ has the sign of $u$. The surfaces $\nu=\mathrm{const}$ are
the tori $(\rho-a\coth\nu)^2+z^2=a^2/\sinh^2\nu$ around the circle, and
the surfaces $u=\mathrm{const}$ are the spherical caps
$\rho^2+(z-a\cot u)^2=a^2/\sin^2u$ bounded by it. With $r^2=\rho^2+z^2$,
three consequences of \eqref{coords} are used repeatedly:
\begin{equation}\label{toroidalrelations}
 \tanh\nu=\frac{2a\rho}{r^2+a^2},\qquad
 \frac{r^2}{a^2}=\frac{\cosh\nu+\cos u}{\cosh\nu-\cos u},\qquad
 d=\frac{2a^2}{\sqrt{[(\rho-a)^2+z^2][(\rho+a)^2+z^2]}} .
\end{equation}

Since $\rho+iz$ is a holomorphic function of $\nu-iu$, the map
$(\nu,u)\mapsto(\rho,z)$ is conformal. Differentiating \eqref{coords}
gives
\begin{equation}\label{flatmetric}
 \dd\rho^2+\dd z^2=\frac{a^2}{d^2}\bigl(\dd\nu^2+\dd u^2\bigr),
 \qquad
 \dd\rho^2+\dd z^2+\rho^2\dd\varphi^2
 =\frac{a^2}{d^2}\bigl(\dd\nu^2+\dd u^2+\sinh^2\nu\,\dd\varphi^2\bigr).
\end{equation}
The Euclidean metric is thus diagonal, with scale factors
$h_\nu=h_u=a/d$ and $h_\varphi=\rho=a\sinh\nu/d$. The two-dimensional
Laplacian in \eqref{constraints} is conformally covariant, and the
three-dimensional one follows from the general formula for orthogonal
coordinates. For functions independent of $\varphi$,
\begin{align}
 \Delta_2&=\frac{d^2}{a^2}\bigl(\partial_\nu^2+\partial_u^2\bigr),
 \label{Lap2}\\
 \Delta_3\Phi&=\frac1{h_\nu h_uh_\varphi}
   \bigl[\partial_\nu(h_\varphi\Phi_\nu)
        +\partial_u(h_\varphi\Phi_u)\bigr]
 =\frac{d^3}{a^2\sinh\nu}\left[
   \partial_\nu\Bigl(\frac{\sinh\nu}{d}\,\Phi_\nu\Bigr)
  +\partial_u\Bigl(\frac{\sinh\nu}{d}\,\Phi_u\Bigr)\right].
 \label{Lap3}
\end{align}

Assume now that $q=q(\nu)$, and put $w=q''/4$, where a prime denotes
$\dd/\dd\nu$. By \eqref{Lap2}, the potential in \eqref{constraints} is
$V=\Delta_2q/4=(d^2/a^2)\,q''/4$, and by \eqref{flatmetric} the metric
\eqref{metric} takes a diagonal form:
\begin{equation}\label{V}
 V=\frac{d^2}{a^2}w(\nu),\qquad
 g=\frac{a^2\Phi^4}{d^2}
 \bigl[e^{2q}(\dd\nu^2+\dd u^2)+\sinh^2\nu\,\dd\varphi^2\bigr].
\end{equation}
The potential is a function of $\nu$ multiplied by $d^2$, the same
factor that multiplies the Laplacian. This is what makes the separation
below possible. With \eqref{Lap3} and \eqref{V}, the constraint
\eqref{constraints} reads
\begin{equation}\label{constrainttoroidal}
 \frac{d^3}{a^2\sinh\nu}\left[
   \partial_\nu\Bigl(\frac{\sinh\nu}{d}\,\Phi_\nu\Bigr)
  +\partial_u\Bigl(\frac{\sinh\nu}{d}\,\Phi_u\Bigr)\right]
 +\frac{d^2}{a^2}\,w(\nu)\,\Phi=0 .
\end{equation}

Equation \eqref{constrainttoroidal} is not yet separable, because $d$
depends on both coordinates. The factor $d$ is removed by the
substitution
\begin{equation}\label{PhiF}
 \Phi=\sqrt d\,F(\nu,u).
\end{equation}
A direct computation, which uses $\partial_\nu d=\sinh\nu$,
$\partial_ud=\sin u$ and
$\sinh^2\nu+\sin^2u=d\,(\cosh\nu+\cos u)$, gives the identity
\begin{equation}\label{sqrtd}
 \Delta_3\bigl(\sqrt d\,F\bigr)=\frac{d^{5/2}}{a^2}
 \left[F_{\nu\nu}+\coth\nu\,F_\nu+F_{uu}+\tfrac14F\right].
\end{equation}
The derivatives of $\sqrt d$ cancel the first-order terms of
\eqref{Lap3} that contain $1/d$, and the only trace of $d$ inside the
bracket is the constant $\tfrac14$. Since $V\Phi=(d^{5/2}/a^2)\,wF$ by \eqref{V} and
\eqref{PhiF}, the constraint \eqref{constraints} becomes
\begin{equation}
 \frac{d^{5/2}}{a^2}\left[F_{\nu\nu}+\coth\nu\,F_\nu+F_{uu}
       +\bigl(\tfrac14+w\bigr)F\right]=0 .
\end{equation}
As $d>0$ away from the point at infinity, the constraint
\eqref{constraints} is equivalent to
\begin{equation}\label{F_eq}
 F_{\nu\nu}+\coth\nu\,F_\nu+F_{uu}
       +\bigl(\tfrac14+w\bigr)F=0 ,
\end{equation}
an equation whose coefficients depend on $\nu$ only. For $w=0$ it is
solved by $F=1/\sqrt d$, which is the flat solution $\Phi=1$.

For fixed $\nu>0$ one has $d>0$, so a smooth $\Phi$ gives a smooth
$2\pi$-periodic function $F=\Phi/\sqrt d$ of $u$. The potential
\eqref{V} is even in $u$, that is, symmetric under $z\mapsto-z$, and the
solution constructed below shares this symmetry
(Theorem~\ref{existence}); hence only cosines occur:
\begin{equation}\label{Fourier}
 F(\nu,u)=\sum_{n=0}^\infty f_n(\nu)\cos(nu).
\end{equation}
Because the coefficients of \eqref{F_eq} do not depend on $u$ and
$\partial_u^2\cos(nu)=-n^2\cos(nu)$, each coefficient satisfies an
ordinary differential equation,
\begin{equation}\label{modeeq}
 f_n''+\coth\nu\,f_n'+\bigl(\tfrac14-n^2+w\bigr)f_n=0.
\end{equation}
For $w=0$ this is Legendre's equation of degree $n-1/2$ in the variable
$\cosh\nu$. The boundary conditions, regularity at the circle and the
behaviour at the axis that corresponds to $\Phi\to1$ at infinity, are
imposed in Section~\ref{sec_modes}.

This is the Beig--Husa reduction in toroidal coordinates. In geometric
terms, put $\varrho=\tanh\nu$. Multiplying the seed metric
$e^{2q}(\dd\rho^2+\dd z^2)+\rho^2\dd\varphi^2$ by
$d^2/(a^2\cosh^2\nu)$ and using \eqref{flatmetric} gives
\begin{equation}\label{compactmetric}
 \bar g=e^{2q}\left[\frac{\dd\varrho^2}{1-\varrho^2}
                  +(1-\varrho^2)\dd u^2\right]
                 +\varrho^2\dd\varphi^2
\end{equation}
on $S^3$. Its coefficients do not depend on $u$ and $\varphi$, so
$\partial_u$ and $\partial_\varphi$ are Killing fields of $\bar g$ and
conformal Killing fields of the seed metric; these are the two commuting
circle actions of \cite{BeigHusa}.

It remains to express the boundary conditions \eqref{bcq} for a seed
function $q(\nu)$. On the axis one has $\nu=0$ and
$\partial_\rho\nu\ne0$, so \eqref{bcq} requires
\begin{equation}\label{bcqtor}
 q(0)=0,\qquad q'(0)=0 .
\end{equation}
The decay condition in \eqref{bcq} then holds automatically, because
spatial infinity is the point $(\nu,u)=(0,0)$ of the axis. By
\eqref{toroidalrelations}, $\nu=2a\rho/r^2+O(r^{-3})$ there, and a
smooth even function $q$ with \eqref{bcqtor} has the expansion
\begin{equation}\label{q_asym}
 q=2w(0)\nu^2+O(\nu^4)
   =\frac{8a^2w(0)\rho^2}{r^4}+O(r^{-4}).
\end{equation}
In physical Cartesian coordinates this implies, with corresponding derivative
estimates,
\begin{equation}\label{V_asym}
 V=\frac{4a^2w(0)}{r^4}+O(r^{-6}).
\end{equation}
The leading term is isotropic, although that of $q$ is not.

The toroidal form adds one condition that has no counterpart in
\eqref{bcq}. The coordinates are singular on the circle $\rho=a$, $z=0$,
where $q$ must nevertheless be smooth. Near the circle
$\nu\simeq\log(2a/\delta)$, with $\delta$ the distance from it, so a
non-zero limit $q'(\infty)$ would give $q$ a logarithmic singularity.
Hence $q'(\infty)=0$. Since $q'(0)=0$ and $w=q''/4$, this imposes
\begin{equation}\label{balance}
 \int_0^\infty w(\nu)\,\dd\nu=0.
\end{equation}
The Euclidean volume element is
$a^3\sinh\nu\,d^{-3}\dd\nu\,\dd u\,\dd\varphi$, and
$\int_{-\pi}^{\pi}\dd u/d=2\pi/\sinh\nu$. With \eqref{V}, condition
\eqref{balance} is therefore the same as
\begin{equation}\label{zeromean}
 \int_{\R^3}V\,\dd^3x=4\pi^2a\int_0^\infty w(\nu)\,\dd\nu=0 ,
\end{equation}
the vanishing of the volume integral of the potential, which holds for
every regular Brill wave \cite{Wheeler}. The same condition fixed the
potential in the thin-torus construction \cite{Koc21}. It excludes, for
example, a nonzero profile proportional to $\sech^2\nu$ alone.

Regularity by itself does not require analyticity. In the class of
analytic functions considered here, the conditions above mean that $q$
is an even analytic function of $\nu$ with $q(0)=0$, and that at the
circle it is analytic in
\begin{equation}\label{x_circle}
 x=e^{-2\nu}=\frac{(\rho-a)^2+z^2}{(\rho+a)^2+z^2}.
\end{equation}

\section{An analytic hypergeometric family}
Writing $f_n=g_n/\sqrt{\sinh\nu}$ transforms \eqref{modeeq} to
\begin{equation}\label{schrodinger}
 g_n''+\left[-n^2+\frac1{4\sinh^2\nu}+w\right]g_n=0.
\end{equation}
For a new variable $\zeta(\nu)$, set $g=(\zeta')^{-1/2}Y$.
The normal-form invariant is
\begin{equation}
 I(\zeta)=\frac{Q-\frac12\{\zeta,\nu\}}{\zeta'^2},
 \qquad
 \{\zeta,\nu\}=\frac{\zeta'''}{\zeta'}-
                       \frac32\left(\frac{\zeta''}{\zeta'}\right)^2,
\end{equation}
where $Q$ is the bracket in \eqref{schrodinger}. The Natanzon change of
variable has $\zeta'^2=4\zeta^2(1-\zeta)^2/\mathcal R(\zeta)$, with
$\mathcal R$ quadratic \cite{Natanzon}. Choosing $\zeta\sim\nu^2$ at the axis
and $1-\zeta\sim e^{-2\nu}$ at the circle leads to the family
\begin{equation}\label{natanzon}
 \zeta'^2=\frac{4\zeta(1-\zeta)^2}{1-k+k\zeta},\qquad k<1.
\end{equation}
This selects a useful family; no classification of all possible solvable
changes of variable is asserted.

Put $\sigma=\sqrt\zeta$ and define
\begin{equation}\label{Rdef}
 R(\sigma)=\sqrt{1-k+k\sigma^2},\qquad
 \frac{\dd\sigma}{\dd\nu}=\frac{1-\sigma^2}{R(\sigma)}.
\end{equation}
Integration gives
\begin{equation}\label{nusigma}
 \nu(\sigma)=\atanh\frac{\sigma}{R}
             -\sqrt{k}\,\atanh\frac{\sqrt{k}\,\sigma}{R}.
\end{equation}
For $k<0$, the second term means
$+\sqrt{|k|}\arctan(\sqrt{|k|}\,\sigma/R)$. All formulas at $k=0$ are
understood by continuity; in particular, $\sigma=\tanh\nu$.

Denote the real potential parameter by $S=S(k)$; it can be negative. The potential is
\begin{equation}\label{wformula}
 w=\frac{S(1-\sigma^2)}{R^2}
 +\frac{(1-\sigma^2)\bigl[(1-k)^2(1-\sigma^2)+4k\sigma^4\bigr]}
        {4\sigma^2R^6}
 -\frac1{4\sinh^2\nu(\sigma)}.
\end{equation}
The apparent double poles at $\sigma=0$ cancel. An elementary primitive is
\begin{align}
 q'(\nu)&=\coth\nu-
          \frac{P(\sigma)}{3\sigma(1-k)R^3}+4S A(\sigma),\label{qprime}\\
 P(\sigma)&=3(1-k)^2+3(1+2k-3k^2)\sigma^2
                       +2k(3k-1)\sigma^4,\label{P}\\
 A(\sigma)&=\int_0^\sigma\frac{\dd t}{R(t)}
 =\frac1{\sqrt{k}}\atanh\frac{\sqrt{k}\,\sigma}{R}.
\end{align}
For negative $k$, the last expression is
$\arctan(\sqrt{|k|}\,\sigma/R)/\sqrt{|k|}$.
Condition \eqref{balance} fixes
\begin{equation}\label{Sdef}
 \boxed{S(k)=\frac{(3+k)\sqrt{k}}
 {12(1-k)\atanh\sqrt{k}}},\qquad S(0)=\frac14.
\end{equation}
The real continuation for $k<0$ uses
$\atanh\sqrt{k}/\sqrt{k}=\arctan\sqrt{|k|}/\sqrt{|k|}$.
The metric function is normalized by $q(0)=0$. At the axis,
\begin{equation}\label{waxis}
 w(0)=\frac{k^2-6k-3+12(1-k)S(k)}{12(1-k)^2}.
\end{equation}

\subsection{An explicit dilogarithmic metric function}
The second integration can also be performed explicitly. For $k\ne0$, put
\begin{equation}\label{chitheta}
 b=\sqrt k,\qquad
 \chi=\atanh\frac{\sigma}{R},\qquad
 B=\atanh\frac{b\sigma}{R},\qquad
 c=\frac{1-b}{1+b}.
\end{equation}
Thus $\nu=\chi-bB$. Define the following finite combination of
dilogarithms, $\operatorname{Li}_2(x)=-\int_0^x\log(1-t)\,\dd t/t$:
\begin{equation}\label{Ldef}
 \begin{split}
 \mathcal L(\chi;c)={}&\operatorname{Li}_2\bigl(-ce^{-2\chi}\bigr)
                       -\operatorname{Li}_2(-c)\\
 &-\operatorname{Li}_2\bigl(-c^{-1}e^{-2\chi}\bigr)
  +\operatorname{Li}_2\bigl(-c^{-1}\bigr).
 \end{split}
\end{equation}
Then the normalized metric function is
\begin{equation}\label{qclosed}
 \boxed{
 \begin{aligned}
 q(\sigma;k)={}&\log\frac{\sinh\nu}{\sqrt{1-k}\,\sigma}
   -\frac56\log\frac{R^2}{1-k}
   +\frac{3-k}{3(1-k)}\log(1-\sigma^2)\\
   &+\frac Sb\bigl[4\chi\,\atanh b+\mathcal L(\chi;c)\bigr]-2SB^2 .
 \end{aligned}}
\end{equation}
The formula holds for $0<\sigma<1$ and extends to the endpoints by limits.
At $k=0$ one has $q=0$. For $0<k<1$ all arguments of the dilogarithms are
negative and \eqref{qclosed} is manifestly real. For $k<0$, choose
$b=i\sqrt{|k|}$ and the principal branches. Then $\chi$ is real, $B$ and
$\atanh b$ are purely imaginary, and the paired dilogarithms in
\eqref{Ldef} are complex conjugates. Consequently \eqref{qclosed} is real.
Equations \eqref{nusigma} and \eqref{qclosed} give an explicit parametric
representation; no quadrature is required, although evaluating $q$ at a
prescribed $\nu$ generally requires inversion of \eqref{nusigma}.

To derive the expression, set $C=(3-k)/[3(1-k)]$. The rational part of
$q'(\nu)\,\dd\nu$ has the elementary primitive
\begin{equation}\label{rationalprimitive}
 \frac{P(\sigma)}{3\sigma(1-k)R^2(1-\sigma^2)}
 =\frac{\dd}{\dd\sigma}\left[\log\sigma+\frac56\log R^2
        -C\log(1-\sigma^2)\right].
\end{equation}
For the remaining term, $A=B/b$ and $\nu=\chi-bB$ give
\begin{equation}
 4S\int_0^\nu A\,\dd\nu
 =\frac{4S}{b}\int_0^\chi\atanh(b\tanh t)\,\dd t-2SB^2 .
\end{equation}
Now use
\begin{align}
 \atanh(b\tanh\chi)&=\atanh b
   +\frac12\log\frac{1+ce^{-2\chi}}{1+c^{-1}e^{-2\chi}},\label{thetaidentity}\\
 \frac{\dd}{\dd\chi}\operatorname{Li}_2\bigl(-ce^{-2\chi}\bigr)
   &=2\log\bigl(1+ce^{-2\chi}\bigr).
\end{align}
Integrating from zero yields \eqref{qclosed}; its normalization follows
from $\nu\sim\sqrt{1-k}\,\sigma$ and $\mathcal L(0;c)=B(0)=0$.

For completeness, let $L=\atanh b$. Cancellation of the terms linear in
$\chi$, which is condition \eqref{Sdef}, gives the finite circle value
\begin{equation}\label{qinf}
 \begin{split}
 q_\infty={}&-bL+\frac{3+k}{3(1-k)}\log2
            -\frac{2}{3(1-k)}\log(1-k)-2SL^2\\
 &+\frac Sb\Bigl[\operatorname{Li}_2\bigl(-c^{-1}\bigr)
                 -\operatorname{Li}_2(-c)\Bigr].
 \end{split}
\end{equation}

At $k=-3$ the coefficient $S$ vanishes. This non-flat member lies in the
physical existence interval established below and has an elementary
profile:
\begin{equation}\label{qminus3}
 \boxed{q(\sigma;-3)=\log\frac{\sinh\nu}{2\sigma}
       -\frac56\log\frac{4-3\sigma^2}{4}+\frac12\log(1-\sigma^2).}
\end{equation}
Here
\begin{equation}
 \nu=\atanh\frac{\sigma}{\sqrt{4-3\sigma^2}}
     +\sqrt3\arctan\frac{\sqrt3\,\sigma}{\sqrt{4-3\sigma^2}},\qquad
 q_\infty=\frac{\pi}{\sqrt3}-\frac{\log2}{3}.
\end{equation}
In particular $w(0)=1/8$ and $m/a=0.23114878$, so this is not the flat
$k=0$ solution.

\begin{proposition}\label{regularity}
For each fixed $k<1$, $q$ and $V$ extend to real-analytic functions on
$\R^3$. The seed metric in \eqref{metric} is regular there, $q$ is bounded,
the boundary conditions \eqref{bcq} are satisfied,
and \eqref{q_asym}--\eqref{V_asym} hold.
\end{proposition}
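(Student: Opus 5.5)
The plan is to reduce everything to one-variable statements about $q$ and $w$ as functions of $\nu$, then transport them to Cartesian coordinates through the two analytic charts supplied by \eqref{toroidalrelations} and \eqref{x_circle}. At the axis the chart is $\tanh^2\nu=4a^2\rho^2/(r^2+a^2)^2$, which is analytic in $(x,y,z)$ because $\rho^2=x^2+y^2$. At the circle the chart is $x=e^{-2\nu}$. So the proposition follows from three claims about $q$.

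\begin{enumerate}[label=(\roman*)]
\item $q$ is an even real-analytic function of $\nu$ near $0$ with $q(0)=0$.
\item $q$ is real-analytic in $\nu$ on $(0,\infty)$.
\item $q'(\nu)=x\,h(x)$ with $h$ analytic near $x=0$.
\end{enumerate}

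Granted these, the regularity of the seed metric on the axis is standard. Write $e^{2q}\dd\rho^2+\rho^2\dd\varphi^2=(\dd\rho^2+\rho^2\dd\varphi^2)+\rho^{-2}(e^{2q}-1)(x\,\dd x+y\,\dd y)^2$. By (i), $e^{2q}-1$ is an analytic function of $\rho^2$ (and $z$) vanishing on the axis, so $\rho^{-2}(e^{2q}-1)$ is analytic. This gives \eqref{bcq} on the axis. The decay in \eqref{bcq}, and \eqref{q_asym}--\eqref{V_asym}, then follow as already explained in Section~2. Boundedness of $q$ follows from continuity on the compactified range $[0,\infty]$ together with the finite limit $q_\infty$ in \eqref{qinf}.

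\textbf{First step: the change of variable.} For $k<1$ one has $R^2=1-k+k\sigma^2\ge\min(1,1-k)>0$ on $[0,1]$, so the ODE \eqref{Rdef} has an analytic right-hand side there. Its solution with $\sigma(0)=0$ is strictly increasing, tends to $1$ as $\nu\to\infty$, and is analytic on $(0,\infty)$. The right-hand side of \eqref{Rdef} is even in $\sigma$, so the solution is odd in $\nu$. Hence $\sigma^2$ is even analytic near $\nu=0$, with $\sigma\sim\nu/\sqrt{1-k}$.

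Near $\sigma=1$, use $R^2-\sigma^2=(1-k)(1-\sigma^2)$ in \eqref{nusigma}:
\[
e^{-2\chi}=\frac{R-\sigma}{R+\sigma}=\frac{(1-k)(1-\sigma^2)}{(R+\sigma)^2}.
\]
The term $e^{2bB}$ is analytic and nonzero at $\sigma=1$. Therefore $x=e^{-2\nu}$ is an analytic function of $\sigma$ near $1$ with a simple zero there, and it can be inverted, so $\sigma$ is analytic in $x$.

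\textbf{Second step: $q$ at the axis and in the interior.} Every term of \eqref{qprime} is an odd analytic function of $\sigma$ near $0$, apart from the $1/\sigma$ singularities of $\coth\nu$ and of $P/(3\sigma(1-k)R^3)$. These cancel because $P(0)/(3(1-k)R(0)^3)=\sqrt{1-k}=\lim\sigma\coth\nu$. So $q'$ is odd in $\nu$ and $q=\int_0^\nu q'$ is even with $q(0)=0$. Equivalently, one can read (i) directly off \eqref{qclosed}: $\sinh\nu/\sigma$, $R^2$, $1-\sigma^2$, $\chi^2$-type combinations and $B^2$ are even, and $\mathcal L$ enters multiplied by $S/b$. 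Analyticity on $(0,\infty)$ follows because $q'$ is analytic in $\sigma\in(0,1)$.

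\textbf{Third step: the circle, which I expect to be the main obstacle.} Expand $q'$ near $\sigma=1$ in terms of $1-\sigma^2$, which is analytic in $x$ with a simple zero. The term $P/(3\sigma(1-k)R^3)$ is analytic there. The term $\coth\nu=1+2x/(1-x)$ is analytic in $x$. The term $A(\sigma)$ is analytic at $\sigma=1$. Hence $q'$ is analytic in $x$. Its value at $x=0$ is
\[
1-\frac{P(1)}{3(1-k)}+4S\,\frac{\atanh\sqrt k}{\sqrt k},
\]
where I used $R(1)=1$. I would check that this vanishes exactly when $S$ is given by \eqref{Sdef}, i.e.\ that this is condition \eqref{balance}; this gives $q'=x\,h(x)$. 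Since $\dd\nu=-\dd x/(2x)$, we get $q=q_\infty+\tfrac12\int_0^x h$, which is analytic in $x$.

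Also $4w=q''=-2x\,\tfrac{\dd}{\dd x}(xh)=O(x)$. Finally, by \eqref{toroidalrelations},
\[
d^2=\frac{4a^4}{x\,[(\rho+a)^2+z^2]^2},
\]
so $V=d^2w/a^2$ is analytic across the circle. Away from the circle, $d$ is analytic and positive, so $V$ is analytic there as well, with the axis handled via (i) exactly as for $q$.
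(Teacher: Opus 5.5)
Your proposal is correct and follows essentially the paper's route. The shared steps are: oddness of $\sigma$ in $\nu$ with cancelling $1/\sigma$ terms at the axis; analytic inversion of $x=e^{-2\nu}$ near $\sigma=1$; vanishing of $q'$ at $x=0$, which is exactly \eqref{Sdef}, i.e.\ \eqref{balance}; and transport to $\R^3$ via $\tanh\nu=2a\rho/(r^2+a^2)$ and \eqref{x_circle}. The only difference is that you work with \eqref{qprime} where the paper uses \eqref{wformula}.

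There are minor slips that do not affect the argument. The common residue is $P(0)/(3(1-k)R(0)^3)=\lim\sigma\coth\nu=(1-k)^{-1/2}$, not $\sqrt{1-k}$. Since $\dd q/\dd x=-h/2$, the correct formula is $q=q_\infty-\tfrac12\int_0^xh$. Your aside on reading evenness directly off \eqref{qclosed} is loose: only the sum $4\chi\atanh b+\mathcal L$, whose $\chi$-derivative is $4\atanh(b\tanh\chi)$, is even, but nothing in your main argument depends on this.
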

\begin{proof}
At the axis $\sigma$ is an odd analytic function of $\nu$ and
\eqref{wformula} has an even analytic extension. Integrating with
$q(0)=q'(0)=0$ gives $q=O(\nu^2)$. In physical coordinates,
$\tanh\nu=2a\rho/(r^2+a^2)$ by \eqref{toroidalrelations}, so $q$ is analytic in $\rho^2,z$ and
$e^{2q}-1=O(\rho^2)$, precisely the regular-axis condition for the metric.

At the circle put $\varepsilon=1-\sigma$ and $x=e^{-2\nu}$. Equation
\eqref{Rdef} gives
$x\,\dd\varepsilon/\dd x=\varepsilon(2-\varepsilon)/(2R)$.
Equivalently, separation yields
$\log x=\log\varepsilon+\text{an analytic function of }\varepsilon
+\text{constant}$. The inverse function theorem therefore gives
$\varepsilon=Cx+O(x^2)$ analytically, with $C>0$. Formula
\eqref{wformula} is analytic in $x$ and vanishes at $x=0$. Since
$q'(\infty)=0$, both $q'$ and $q-q_\infty$ are analytic in $x$ and
$O(x)$. Equation \eqref{x_circle} proves physical analyticity there;
the factor $d^2w$ in $V$ also has a regular analytic extension. Away from
these coordinate loci the claim is immediate. Expanding at $\nu=0$
and using \eqref{coords} proves the asymptotic statements.
\end{proof}

\begin{figure}[tb]
\centering\includegraphics[width=\linewidth]{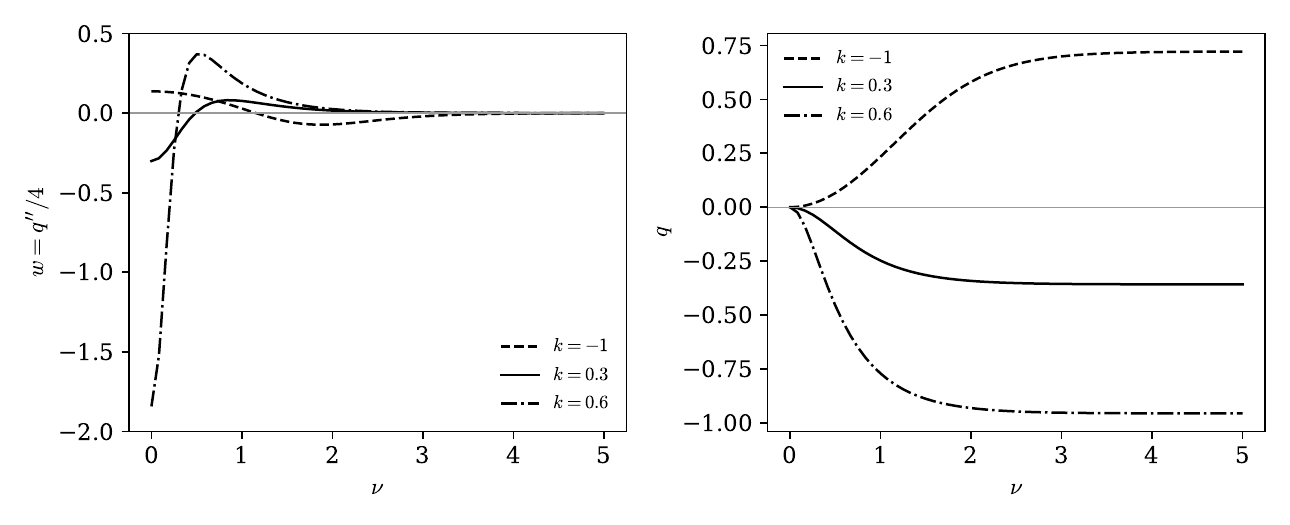}
\caption{Analytic profiles for $k=-1,0.3,0.6$. The coordinate $\nu=0$
is the axis and $\nu\to\infty$ approaches the circle. The limiting constant
$q_\infty$ is compatible with regularity.}\label{fig_profiles}
\end{figure}

\section{A ground-state identity and the exact existence interval}
Let $\D$ be the completion of $C_c^\infty(\R^3)$ in the norm
$\|\nabla\eta\|_{L^2}$, and define
\begin{equation}\label{energy}
 E_k[\eta]=\int_{\R^3}(|\nabla\eta|^2-V\eta^2)\,\dd^3x.
\end{equation}
All quantities in this quadratic form refer to the Euclidean metric.
The potential is bounded and belongs to $L^{3/2}\cap L^{6/5}$.
The potential term is a compact quadratic form on $\D$: approximate $V$
in $L^{3/2}$ by bounded compactly supported functions and use local Rellich
compactness and the Sobolev inequality.

\begin{lemma}\label{auxlemma}
For every $k<1$, the positive function
\begin{equation}\label{auxh}
 h_k=\sqrt d\,\sqrt{\frac{\sigma R(\sigma)}{\sinh\nu}}
\end{equation}
extends smoothly to $\R^3$, belongs to $\D$, and satisfies
\begin{equation}\label{auxeq}
 (\Delta_3+V)h_k=(S-1)C_kh_k,\qquad
 C_k=\frac{d^2(1-\sigma^2)}{a^2R^2}>0.
\end{equation}
Here $C_k$ has a smooth positive extension across the coordinate loci.
Furthermore,
\begin{align}
 E_k[\eta]&=\int h_k^2\left|\nabla(\eta/h_k)\right|^2\dd^3x
           +(1-S)\int C_k\eta^2\dd^3x,
               &&\eta\in C_c^\infty(\R^3),\label{groundstate}\\
 E_k[h_k]&=2\pi^2a(1-S).\label{trialenergy}
\end{align}
\end{lemma}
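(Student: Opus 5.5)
The plan is to reduce everything to the separated form already derived. Write $h_k=\sqrt d\,F$ with $F=f(\nu)=g(\nu)/\sqrt{\sinh\nu}$ and $g=\sqrt{\sigma R(\sigma)}$, so $h_k$ is a pure $n=0$ mode. By the identity \eqref{sqrtd} together with $V\sqrt d F=(d^{5/2}/a^2)wF$,
\begin{equation*}
 (\Delta_3+V)h_k=\frac{d^{5/2}}{a^2\sqrt{\sinh\nu}}
 \Bigl[g''+\Bigl(\frac1{4\sinh^2\nu}+w\Bigr)g\Bigr].
\end{equation*}
This uses the passage from \eqref{modeeq} to \eqref{schrodinger} with $n=0$. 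Hence \eqref{auxeq} is equivalent to the scalar identity
\begin{equation*}
 \frac{g''}{g}+\frac1{4\sinh^2\nu}+w=(S-1)\frac{1-\sigma^2}{R^2}.
\end{equation*}
To check it, insert \eqref{wformula}. The $S$-term of $w$ matches $S(1-\sigma^2)/R^2$ on the right exactly, and the $1/(4\sinh^2\nu)$ terms cancel. What remains is to verify
\begin{equation*}
 \frac{g''}{g}=-\frac{1-\sigma^2}{R^2}
 -\frac{(1-\sigma^2)\bigl[(1-k)^2(1-\sigma^2)+4k\sigma^4\bigr]}{4\sigma^2R^6}.
\end{equation*}
I would do this by writing $\dd/\dd\nu=\bigl((1-\sigma^2)/R\bigr)\dd/\dd\sigma$ from \eqref{Rdef} and $\log g=\tfrac12\log\sigma+\tfrac14\log(1-k+k\sigma^2)$. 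This is a rational-function computation in $\sigma$ and is the one step where a genuine calculation is needed. It is essentially the Natanzon normal form with $Y\propto\sigma\sqrt{1-\sigma^2}$, so I expect it to close, but it is the main obstacle. Multiplying back gives $(\Delta_3+V)h_k=(S-1)C_kh_k$ with the stated $C_k$.

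Next comes regularity, following the proof of Proposition~\ref{regularity}.

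\emph{Axis.} $\sigma$ is odd and analytic in $\nu$, so $\sigma R/\sinh\nu$ is even, analytic and positive.

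\emph{Circle.} As $\nu\to\infty$, $d/\sinh\nu\to1$ analytically in $x=e^{-2\nu}$, and $\sigma\to1$ analytically in $x$.

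\emph{Infinity.} Use $d\,\sigma/\sinh\nu=a\sigma/\rho$, with $\sigma=\tanh\nu\cdot(\text{even analytic})$ and $\tanh\nu=2a\rho/(r^2+a^2)$. This gives $h_k^2=\bigl(2a^2/(r^2+a^2)\bigr)\times(\text{positive analytic})$, so $h_k=O(1/r)$ and $\nabla h_k=O(r^{-2})$.

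\emph{$C_k$.} Near the circle $d^2\sim 1/(4x)$ and $1-\sigma^2\sim 2Cx$, so the product tends to a positive analytic limit. At infinity $C_k=O(r^{-4})$.

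Consequently $\nabla h_k\in L^2$. Together with $h_k\in L^6$, a standard cutoff approximation $h_k\chi(\cdot/R_0)$ shows $h_k\in\D$.

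The ground-state identity \eqref{groundstate} follows by setting $\eta=h_k\phi$ with $\phi=\eta/h_k\in C_c^\infty$, which is legitimate because $h_k>0$ is smooth. Integrating by parts,
\begin{equation*}
 \int|\nabla(h_k\phi)|^2=\int h_k^2|\nabla\phi|^2-\int\phi^2h_k\Delta_3h_k,
\end{equation*}
with no boundary terms because $\phi$ has compact support. Subtracting $\int Vh_k^2\phi^2$ and using \eqref{auxeq} gives $-(S-1)\int C_k\eta^2=(1-S)\int C_k\eta^2$.

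For \eqref{trialenergy}, I would apply \eqref{groundstate} to cutoffs $\eta_j=h_k\phi_j$ with $\phi_j\to1$, $0\le\phi_j\le1$, $|\nabla\phi_j|\lesssim 1/j$ supported in $j<r<2j$. Then $\int h_k^2|\nabla\phi_j|^2\lesssim j^{-2}\cdot j^{-2}\cdot j^3\to0$. Moreover $E_k[\eta_j]\to E_k[h_k]$, since $V\in L^{3/2}$ makes the form continuous on $\D$. Hence $E_k[h_k]=(1-S)\int C_kh_k^2\,\dd^3x$. In toroidal coordinates the volume element is $a^3\sinh\nu\,d^{-3}\dd\nu\,\dd u\,\dd\varphi$. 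Combined with $C_kh_k^2=d^3(1-\sigma^2)\sigma/(a^2R\sinh\nu)$, the integrand becomes $a\,\sigma(1-\sigma^2)R^{-1}\,\dd\nu\,\dd u\,\dd\varphi$. By \eqref{Rdef} this equals $a\,\sigma\,\dd\sigma\,\dd u\,\dd\varphi$, so the integral is $4\pi^2a\int_0^1\sigma\,\dd\sigma=2\pi^2a$, which proves \eqref{trialenergy}.
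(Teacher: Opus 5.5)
Your proposal is correct and follows the paper's proof. Like the paper, you verify the $n=0$ residual by substitution (in the Schr\"odinger form \eqref{schrodinger}, which is equivalent to \eqref{modeeq}), use the endpoint expansions for smoothness and the decay $h_k\simeq\sqrt2a/r$, integrate by parts with $\eta=h_k\phi$, and evaluate $\int C_kh_k^2$ through $\dd\sigma=(1-\sigma^2)R^{-1}\dd\nu$; your cutoff route to \eqref{trialenergy} is only a variant of the paper's vanishing boundary term at infinity. The rational identity you flagged does close: with $t=\sigma^2$, multiplying both sides of your last display by $4\sigma^2R^6/(1-\sigma^2)$ gives $-(1-k)^2-3(1-k)^2t-4k(3-2k)t^2-4k^2t^3$ on each side.
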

\begin{proof}
For the radial factor $f=\sqrt{\sigma R/\sinh\nu}$, direct substitution
in \eqref{modeeq}, with $n=0$, gives the residual
$(S-1)(1-\sigma^2)f/R^2$. This proves \eqref{auxeq}. The endpoint
expansions used in Proposition~\ref{regularity} prove smoothness and
positivity. At the circle $h_k\to1$, and at infinity
$h_k=\sqrt2a/r+O(r^{-3})$, so $h_k\in\D$.
Expanding $\nabla(h_k\eta/h_k)$ and integrating by parts proves
\eqref{groundstate}. There is no artificial boundary at the circle in
this calculation. Finally, the Euclidean volume element is
$a^3\sinh\nu\,d^{-3}\dd\nu\dd u\dd\varphi$, whence
\begin{equation}
 \int C_kh_k^2\dd^3x
 =4\pi^2a\int_0^1\sigma\,\dd\sigma=2\pi^2a.
\end{equation}
Integration by parts at infinity has a vanishing boundary term for $h_k$,
and gives \eqref{trialenergy}.
\end{proof}

\begin{lemma}\label{critical}
The equation $S(k)=1$ has exactly one solution $k_c<1$. It lies in $(0,1)$,
with $S<1$ for $k<k_c$ and $S>1$ for $k_c<k<1$. Explicitly,
\begin{equation}\label{kceq}
 12(1-k_c)\frac{\atanh\sqrt{k_c}}{\sqrt{k_c}}=3+k_c,
 \qquad k_c=0.80480167722084357844\ldots .
\end{equation}
The quotient in this equation is continuously extended at $k=0$.
\end{lemma}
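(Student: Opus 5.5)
Write $T(k)=\atanh\sqrt{k}/\sqrt{k}$, real-analytic on $(-\infty,1)$ with $T(0)=1$. For $k<0$ we have $T(k)=\arctan\sqrt{|k|}/\sqrt{|k|}$, and for $0<k<1$ the power series $T(k)=\sum_{j\ge0}k^j/(2j+1)$ applies. Since $1-k>0$ and $T>0$ on $(-\infty,1)$, the equation $S(k)=1$ is equivalent to $G(k)=0$, where
\[
 G(k)=12(1-k)T(k)-(3+k).
\]
This is exactly \eqref{kceq}. The proof then has three steps: (i) $G>0$ on $(-\infty,0]$; (ii) $G$ has exactly one zero in $(0,1)$, with a sign change there; (iii) the sign of $S-1$ equals $-\operatorname{sign}G$.

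For (iii), note that $S-1=-G/[12(1-k)T]$. For (i), take $k\le-3$ first. There $3+k\le0$ while $12(1-k)T>0$, so $G>0$. For $-3<k\le0$, put $s=\sqrt{|k|}\in[0,\sqrt3)$, so $k=-s^2$ and $G=12(1+s^2)\arctan(s)/s-(3-s^2)$. The elementary bound $\arctan s/s\ge 1/(1+s^2)$ gives $G\ge 12-(3-s^2)>0$. (Alternatively, $(1+s^2)\arctan s/s\ge1$ because $\arctan s\ge s/(1+s^2)$.) Hence $G>0$ and $S<1$ for all $k\le0$.

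For (ii), on $(0,1)$ use the series:
\[
 G(k)=12\sum_{j\ge0}\frac{k^j}{2j+1}-12\sum_{j\ge1}\frac{k^j}{2j-1}-3-k .
\]
The coefficient of $k^j$ for $j\ge1$ is $12\bigl[\tfrac1{2j+1}-\tfrac1{2j-1}\bigr]=-24/(4j^2-1)$. Adding the constant and linear corrections gives
\[
 G(k)=9-9k-\sum_{j\ge2}\frac{24\,k^j}{4j^2-1}.
\]
(Check at $j=1$: $-24/3-1=-9$.) Thus $G(0)=9>0$, and $G$ is strictly decreasing on $(0,1)$, since all higher coefficients are negative. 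As $k\to1^-$, $(1-k)T(k)\to0$ because $T$ has only a logarithmic singularity, so $G\to-4<0$. By continuity and strict monotonicity there is exactly one zero $k_c\in(0,1)$, with $G>0$ on $(0,k_c)$ and $G<0$ on $(k_c,1)$.

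Combining (i)–(iii): $S<1$ for $k<k_c$, $S>1$ for $k_c<k<1$, and $k_c$ is the unique solution of $S=1$ on $(-\infty,1)$. The numerical value is confirmed by bisection or Newton's method on $G$, which is a certified computation because $G$ is monotone on $(0,1)$. The only delicate step is the sign control for $k<0$, and the $\arctan$ inequality handles it.
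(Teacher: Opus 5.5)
Your proof is correct and takes essentially the same approach as the paper. On $(0,1)$ you use the series $(1-k)\atanh\sqrt k/\sqrt k=1-\sum_{j\ge1}2k^j/(4j^2-1)$ to get a single monotone crossing, you handle $k\le0$ with $\arctan t\ge t/(1+t^2)$, and $k\le-3$ is immediate from the sign of $3+k$. The differences are cosmetic: you work with $G$ and the identity $S-1=-G/[12(1-k)T]$ rather than bounding $S$ directly, and your bound $G\ge 9+s^2$ in fact covers all $k\le0$ at once, so the split at $k=-3$ is unnecessary.
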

\begin{proof}
For $0<k<1$,
\begin{equation}
 (1-k)\frac{\atanh\sqrt k}{\sqrt k}
       =1-\sum_{j=1}^\infty\frac{2k^j}{4j^2-1}
\end{equation}
decreases strictly from $1$ to $0$, while $3+k$ increases. The two
sides of \eqref{kceq} cross once. For $-3<k\le0$, the inequality
$\arctan t\ge t/(1+t^2)$ gives $S\le(3+k)/12<1$.
For $k\le-3$ one has $S\le0$.
\end{proof}

\begin{theorem}\label{existence}
For every $a>0$ and $k<k_c$ there is a unique solution
$\Phi=1+\psi$, $\psi\in\D$, of \eqref{constraints}. It is positive,
real analytic on $\R^3$, axisymmetric and even in $z$. It has the asymptotic
form
\begin{equation}\label{AF}
 \Phi=1+\frac{m}{2r}+O_2(r^{-2}).
\end{equation}
The metric \eqref{metric} is complete, has one asymptotically flat end,
and satisfies the vacuum constraint everywhere. No positive solution
with the asymptotics \eqref{AF} exists for $k_c\le k<1$.
\end{theorem}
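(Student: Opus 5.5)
The plan is to solve for $\psi=\Phi-1$ by the Lax--Milgram theorem on $\D$, to get positivity from the ground-state identity of Lemma~\ref{auxlemma}, and to prove non-existence by testing against $h_k$.

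\textbf{Existence and uniqueness.} Writing $\Phi=1+\psi$, the constraint becomes $-\Delta_3\psi-V\psi=V$. Here $V\in L^{6/5}$, so $\eta\mapsto\int V\eta$ is a bounded functional on $\D$. For $k<k_c$, Lemma~\ref{critical} gives $S<1$, and \eqref{groundstate} then gives $E_k[\eta]\ge0$ for all $\eta\in\D$, after extending by density using the compactness of the potential term.

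Strict positivity alone is not coercivity, so I would argue by contradiction. Take a minimizing sequence with $\|\nabla\eta_j\|=1$ and $E_k[\eta_j]\to0$. Compactness of $\int V\eta^2$ gives a weak limit $\eta$ with $E_k[\eta]\le0$, hence $E_k[\eta]=0$, and \eqref{groundstate} forces $\eta=0$. Then $\int V\eta_j^2\to0$, which contradicts $E_k[\eta_j]\to0$ with $\|\nabla\eta_j\|=1$. So $E_k$ is coercive. Lax--Milgram then yields a unique weak solution $\psi\in\D$.

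\textbf{Symmetry and regularity.} By uniqueness and the invariance of $V$ under rotations about the axis and under $z\mapsto-z$, the solution $\psi$ is axisymmetric and even in $z$. Since $V$ is real analytic (Proposition~\ref{regularity}), elliptic regularity makes $\psi$ smooth, and analytic elliptic theory (Morrey--Nirenberg) makes it real analytic.

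\textbf{Positivity.} I would apply \eqref{groundstate} to $\eta=\Phi^-=\max(-\Phi,0)$, the negative part of $\Phi$. Because $\Phi\to1$ at infinity, $\Phi^-$ has compact support and lies in $H^1$, so \eqref{groundstate} applies by density. Testing the equation with $\Phi^-$ gives $E_k[\Phi^-]=0$, and the identity then forces $\Phi^-=0$. Hence $\Phi\ge0$, and the strong maximum principle (Harnack for $\Delta+V$) gives $\Phi>0$.

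\textbf{Asymptotics and completeness.} From $V=O(r^{-4})$ by \eqref{V_asym}, the Green representation $\psi=\frac1{4\pi}\int V\Phi/|x-y|$ gives \eqref{AF} with $m=\frac1{2\pi}\int V\Phi$. Derivative estimates follow from the Kelvin transform or from standard weighted estimates.

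For completeness, $\Phi$ is bounded above and below by positive constants on compact sets and tends to $1$ at infinity, and $q$ is bounded by Proposition~\ref{regularity}. So $g$ is uniformly equivalent to the flat metric and is therefore complete, with one end.

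\textbf{Non-existence for $k_c\le k<1$.} Suppose a positive $\Phi$ with \eqref{AF} exists. Multiply $(\Delta_3+V)\Phi=0$ by $h_k$ and \eqref{auxeq} by $\Phi$, subtract, and integrate over a large ball $B_r$. The left side becomes a boundary term
\begin{equation*}
\oint_{\partial B_r}\bigl(h_k\partial_r\Phi-\Phi\partial_r h_k\bigr)\,\dd A .
\end{equation*}
With $h_k\sim\sqrt2a/r$ and $\Phi\to1$ this tends to $4\pi\sqrt2a$, and the correction from $m$ vanishes. The other side is $(S-1)\int C_kh_k\Phi$, which is $\le0$ when $S\ge1$, while the boundary limit $4\pi\sqrt2 a$ is strictly positive. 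This is a contradiction, with no separate treatment needed for $k=k_c$. The sign of the boundary term must be checked carefully.

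The main obstacle is the positivity and coercivity step. The identity \eqref{groundstate} must be extended from $C_c^\infty$ to $\D$ and to $H^1$ truncations without assuming that $\eta/h_k$ is regular. The way around this is to use \eqref{groundstate} only as a lower bound obtained by density.
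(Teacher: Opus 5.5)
Your existence, uniqueness, symmetry, positivity and completeness steps follow the paper's proof. Your compactness-by-contradiction argument is an explicit version of the paper's remark that $E_k$ is the identity minus a compact form. The negative-part test with \eqref{groundstate} is the same as the paper's.

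\textbf{Non-existence: a different route.} The paper uses Green's identity with $h_k$ only at $S=1$, where $h_k$ is a zero mode. For $S>1$ it argues variationally. A positive solution gives $E_k[\eta]=\int\Phi^2|\nabla(\eta/\Phi)|^2\ge0$ on $C_c^\infty$, hence on $\D$. This contradicts the negative trial energy \eqref{trialenergy} of $h_k\in\D$. You instead run one Green identity for all $S\ge1$. This works, but the signs as you wrote them do not match. With your boundary term $\oint_{\partial B_r}(h_k\partial_r\Phi-\Phi\partial_rh_k)\,\dd A$, one has $h_k(\Delta_3+V)\Phi-\Phi(\Delta_3+V)h_k=(1-S)C_kh_k\Phi$. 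So the volume side is $(1-S)\int_{B_r}C_kh_k\Phi\,\dd^3x$. The expression $(S-1)\int C_kh_k\Phi$ you wrote is $\ge0$ for $S>1$ and $\Phi>0$, so it would give no contradiction. With the correct factor the volume side is $\le0$ for every $r$, while the boundary term tends to $+4\pi\sqrt2a$. The contradiction therefore holds, for $S=1$ and $S>1$ alike.

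\textbf{What each route buys.} Yours avoids the computation \eqref{trialenergy} and the case split. For $S>1$ it does use the asymptotics \eqref{AF} of $\Phi$, in particular $\partial_r\Phi=o(r^{-1})$, so that $\oint h_k\partial_r\Phi\,\dd A\to0$. The paper's $S>1$ argument uses only $\Phi>0$. It therefore excludes positive solutions with any behaviour at infinity.

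\textbf{The asymptotics step is thin.} Before you can use $V\Phi=O(r^{-4})$ you must show $\psi\to0$ at infinity; the paper does this by rescaled interior estimates. You must also show that $\psi$ equals its Newton potential; the difference is harmonic and lies in $\D$, hence vanishes. Even then, $f=V\Phi=O(r^{-4})$ alone only bounds the dipole term by $O(r^{-2}\log r)$. The stated $O(r^{-2})$ remainder needs the dipole cancellation. That cancellation comes from the parity $f(y)=f(-y)$, which your symmetry step already provides. Alternatively, it comes from the isotropy of the leading term in \eqref{V_asym} after a bootstrap. This is how Appendix~\ref{app_asym} proceeds.
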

\begin{proof}
For $S<1$, \eqref{groundstate}, extended by approximation, implies
$E_k[\eta]\ge(1-S)\int C_k\eta^2>0$ for nonzero $\eta\in\D$.
Because $E_k$ is the identity minus a compact self-adjoint form in the
Dirichlet inner product, this strict positivity is coercivity:
$E_k[\eta]\ge c_k\|\nabla\eta\|_2^2$, with $c_k>0$.
Indeed, failure of coercivity would make zero an eigenvalue of that compact
perturbation, contradicting strict positivity.

The functional $\eta\mapsto\int V\eta$ is bounded on $\D$ by
$V\in L^{6/5}$ and Sobolev embedding. Lax--Milgram therefore gives the
unique solution of
$(-\Delta_3-V)\psi=V$. Interior elliptic regularity and the analytic
coefficients give local analyticity. The standard asymptotically flat
elliptic estimates, or the Newton-potential argument in
Appendix~\ref{app_asym}, give \eqref{AF}. This is the usual variational
framework for the scalar-curvature equation \cite{CantorBrill}.

Since $\Phi\to1$, its negative part has compact support. Testing the
homogeneous equation for $\Phi$ with this negative part gives zero energy,
so coercivity implies $\Phi\ge0$. Harnack's inequality gives $\Phi>0$.
Uniqueness yields the axial and equatorial symmetries. Positivity, continuity
and the limit at infinity bound $\Phi$ above and away from zero for fixed
$k$. Together with bounded $q$, this makes $g$ uniformly equivalent to the
Euclidean metric and proves completeness. The regular-axis condition was
proved above.

At $S=1$, $h_k$ is a positive zero mode. Green's identity with a putative
$\Phi\to1$ gives
\begin{equation}
 0=\lim_{r\to\infty}\int_{S_r}
       (\Phi\partial_r h_k-h_k\partial_r\Phi)\,\dd S
    =-4\pi\sqrt2a,
\end{equation}
a contradiction. For $S>1$, \eqref{trialenergy} is negative. A positive
solution $\Phi$ would instead give
$E_k[\eta]=\int\Phi^2|\nabla(\eta/\Phi)|^2\ge0$ for every compactly
supported test function. Approximating $h_k$ in $\D$ contradicts its
negative energy. Lemma~\ref{critical} finishes the proof.
\end{proof}

For other seed functions, lower and upper limits on the amplitude for
which the constraint has a solution were determined numerically in
\cite{MastersonHobill}. In the present family there is no lower limit,
and the upper limit $k_c$ is exact.

\section{Explicit modes and identification of the global solution}\label{sec_modes}
The normal-form invariant resulting from \eqref{natanzon} is
\begin{equation}
 I(\zeta)=\frac{(B_0-kn^2/4)\zeta^2-
              (B_0+(1-k)n^2/4)\zeta+1/4}
             {\zeta^2(1-\zeta)^2},\qquad B_0=\frac{1-S}{4}.
\end{equation}
Define
\begin{equation}\label{ab}
 s_n=\sqrt{S+kn^2},\qquad
 \alpha_n=\frac{1+n-s_n}{2},\qquad
 \beta_n=\frac{1+n+s_n}{2},\qquad
 \epsn=\begin{cases}1,&n=0,\\2,&n\ge1.\end{cases}
\end{equation}
The sign of the square root does not matter because it interchanges
$\alpha_n$ and $\beta_n$. For an imaginary $s_n$ these parameters are
complex conjugates. On the physical branch, $k<k_c$, neither parameter is a
non-positive integer: if $s_n$ is real, then $S<1$ and $k<1$ give
$s_n^2<1+n^2\le(n+1)^2$, hence $\alpha_n>0$; otherwise both parameters
are non-real. The circle-regular modes, with their normalizations,
are
\begin{equation}\label{explicit_modes}
 \boxed{
 f_n=\frac{\epsn}{\sqrt2\pi}
       \frac{\Gamma(\alpha_n)\Gamma(\beta_n)}{\Gamma(n+1)}
       \sqrt{\frac{\sigma R}{\sinh\nu}}
       (1-\sigma^2)^{n/2}
       {}_2F_1(\alpha_n,\beta_n;n+1;1-\sigma^2).}
\end{equation}
These expressions are real. They behave as $e^{-(n+1/2)\nu}$ at the
circle. For $n=0$, the second independent solution has an additional
logarithm there and is excluded by regularity.

Since $\alpha_n+\beta_n=n+1$, the zero-balanced connection formula
\cite{AS,DLMF} gives
\begin{equation}\label{lognorm}
 f_n=\frac{\sqrt2\epsn}{\pi}
 \left[-\log\nu+\frac12\log(1-k)-\gamma
          -\frac{\psi(\alpha_n)+\psi(\beta_n)}2+o(1)\right]
 \quad(\nu\to0).
\end{equation}
Here $\psi$ denotes the digamma function. Compare this with Heine's identity
\cite{Cohl}
\begin{equation}\label{Heine}
 H(\nu,u):=\frac1{\sqrt d}
  =\frac{\sqrt2}{\pi}\sum_{n=0}^\infty
       \epsn Q_{n-1/2}(\cosh\nu)\cos(nu),
\end{equation}
and
$Q_{n-1/2}(\cosh\nu)=-\log(\nu/2)-\gamma-\psi(n+1/2)+o(1)$.

\begin{theorem}\label{reconstruction}
For $k<k_c$, the unique solution of Theorem~\ref{existence} is
\begin{equation}\label{Phiseries}
 \boxed{\Phi=\sqrt d\sum_{n=0}^\infty f_n(\nu)\cos(nu),}
\end{equation}
with \eqref{explicit_modes}. On compact subsets with $\nu>0$ and
finite $\nu$, the series and each fixed number of derivatives converge
uniformly. Define
\begin{equation}\label{correctedseries}
 D=\frac{\Phi-1}{\sqrt d},\qquad
 d_n=f_n-\frac{\sqrt2\epsn}{\pi}Q_{n-1/2}(\cosh\nu).
\end{equation}
Then $D$ extends continuously to $\nu=0$, including $u=0$. Its axis
Fourier series, with coefficients $d_n(0)$, converges absolutely and
uniformly, and its value at $(0,0)$ determines the ADM mass.
\end{theorem}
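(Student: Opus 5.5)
Start from the global solution $\Phi=1+\psi$ of Theorem~\ref{existence} and read off its Fourier coefficients, so that no mode-by-mode matching is needed. Put $F=\Phi/\sqrt d$. For $\nu>0$ this is a real-analytic, $2\pi$-periodic function of $u$, and it is even in $u$ by the equatorial symmetry. Its cosine coefficients
\[
 \tilde f_n(\nu)=\frac{\epsn}{2\pi}\int_{-\pi}^{\pi}F(\nu,u)\cos(nu)\,\dd u
\]
satisfy the mode equation \eqref{modeeq}. To see this, multiply \eqref{F_eq} by $\cos(nu)$ and integrate by parts in $u$; periodicity removes the boundary terms.

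Each $\tilde f_n$ must then be identified with $f_n$, using the behaviour at the two ends.

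\textbf{Circle end.} $\Phi$ is smooth and $\sqrt d\sim e^{\nu/2}$, so $F=O(e^{-\nu/2})$. Hence $\tilde f_n$ is bounded by $Ce^{-\nu/2}$ and lies in the one-dimensional space of circle-regular solutions. For $n\ge1$ the other solution grows like $e^{(n-1/2)\nu}$; for $n=0$ it carries the extra logarithm. Therefore $\tilde f_n=c_nf_n$ for some constant $c_n$, with the parameters of \eqref{ab} admissible since $k<k_c$.

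\textbf{Axis end.} To fix $c_n$, write $F=H+D$ with $H=1/\sqrt d$ as in \eqref{Heine}. Heine's identity gives the coefficients of $H$ as $\sqrt2\epsn Q_{n-1/2}(\cosh\nu)/\pi$, with leading term $-(\sqrt2\epsn/\pi)\log\nu$. By contrast $D=\psi/\sqrt d$ stays bounded as $\nu\to0$. This holds away from $u=0$ because $\psi$ is smooth and $d$ is bounded below there. Near $(0,0)$, the expansion \eqref{AF} gives $\psi=m/(2r)+O(r^{-2})$ with $r\sim a\sqrt{2/d}$ from \eqref{toroidalrelations}, so $\psi/\sqrt d$ is continuous at infinity with value $m/(2\sqrt2 a)$. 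Consequently $\tilde f_n+(\sqrt2\epsn/\pi)\log\nu$ stays bounded. Comparing with \eqref{lognorm}, whose coefficient of $-\log\nu$ is exactly $\sqrt2\epsn/\pi$, forces $c_n=1$. The Fourier coefficients of $D$ are then $d_n$, and $d_n(0)$ is the finite constant obtained from the difference of \eqref{lognorm} and the $Q$-asymptotics:
\[
 d_n(0)=\frac{\sqrt2\epsn}{\pi}\Bigl[\tfrac12\log(1-k)-\log2+\psi(n+\tfrac12)-\tfrac12\bigl(\psi(\alpha_n)+\psi(\beta_n)\bigr)\Bigr].
\]
This proves \eqref{Phiseries} as a pointwise Fourier identity.

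\textbf{Convergence for $\nu>0$.} On compact sets with $0<\nu_0\le\nu\le\nu_1$, $F$ is real-analytic in $u$, uniformly in $\nu$, with a uniform strip of analyticity. Hence its Fourier coefficients and their $\nu$-derivatives decay geometrically, which gives uniform convergence of the series with any fixed number of derivatives. The derivatives in $\nu$ are also controlled by \eqref{modeeq}.

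\textbf{Continuity of $D$ at the axis.} Away from $(0,0)$ this follows from smoothness of $\Phi$ and $d$. At $(0,0)$ it follows from the asymptotics \eqref{AF}, or more precisely from the Newton-potential representation of Appendix~\ref{app_asym}. That representation shows $\psi-m/(2r)=O(r^{-2})$, and since $r^{-2}\sim d$ the remainder divided by $\sqrt d$ tends to zero. So $D(0,0)=m/(2\sqrt2a)$, which determines the mass.

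\textbf{Main obstacle: absolute and uniform convergence of the axis series.} I would get this from the asymptotics of $d_n(0)$. Since $s_n\sim\sqrt k\,n$ for large $n$, $\alpha_n$ and $\beta_n$ behave like $n(1\mp\sqrt k)/2$. The digamma expansion then gives
\[
 \psi(n+\tfrac12)-\tfrac12\bigl(\psi(\alpha_n)+\psi(\beta_n)\bigr)=-\tfrac12\log\frac{1-k}{4}+O(n^{-2}).
\]
This leading constant cancels exactly against $\tfrac12\log(1-k)-\log2$, leaving $d_n(0)=O(n^{-2})$. For $k<0$, $s_n$ is imaginary, and the same expansion in complex arguments is needed. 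The $O(n^{-2})$ bound is summable, so the series $\sum d_n(0)\cos(nu)$ converges absolutely and uniformly.

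The remaining subtlety is that this series actually equals $D(0,u)$. The Fourier coefficients of the continuous function $D(0,\cdot)$ are $\lim_{\nu\to0}d_n(\nu)=d_n(0)$, by dominated convergence since $D$ is bounded near the axis. An absolutely convergent Fourier series of a continuous function converges to that function, which gives $D(0,0)=\sum_n d_n(0)$ and hence the ADM mass formula.
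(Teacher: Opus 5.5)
Your proposal is correct and follows the paper's proof step by step. You take the cosine coefficients of the variational solution, select the circle-regular branch, fix the normalization by matching the $-\log\nu$ term against Heine's expansion using boundedness of $D$, and identify the absolutely convergent axis series with $D(0,u)$ by passing to the limit under the Fourier integrals. The remaining differences are minor. Your bound $F=O(e^{-\nu/2})$ at the circle and your analyticity argument for $\nu>0$ replace the paper's transverse-smoothness and integration-by-parts arguments. Your $O(n^{-2})$ claim depends on the $n^{-1}$ terms cancelling; they do, by $\alpha_n+\beta_n=n+1$ and $\alpha_n\beta_n=\Delta_n/4$, which the paper checks explicitly and you should state.
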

\begin{proof}
Start with the solution already obtained in Theorem~\ref{existence}.
For fixed $\nu>0$, its function $F=\Phi/\sqrt d$ is smooth and periodic
in $u$, and is even. Its cosine coefficients satisfy \eqref{modeeq}.
Smoothness in the transverse coordinates at the circle forces the
$n$th coefficient to be $O(e^{-(n+1/2)\nu})$. It is therefore a multiple
of the regular hypergeometric branch in \eqref{explicit_modes}.

Away from $(0,0)$, regularity of $\Phi$ implies a continuous limit for $D$
at $\nu=0$. Near $(0,0)$, \eqref{AF} and
$\sqrt d=\sqrt2a/r+O(r^{-3})$ imply
\begin{equation}\label{Dlimit}
 D=\frac{m}{2\sqrt2a}+O\bigl(\sqrt{\nu^2+u^2}\bigr).
\end{equation}
Thus $D$ is bounded and continuous on a closed strip containing the
axis, with the two copies of $u=\pi$ identified. Its Fourier coefficients
are bounded as $\nu\to0$. The coefficient of $-\log\nu$ in every
Fourier coefficient of $F$ must consequently equal the one in
\eqref{Heine}. Formula \eqref{lognorm} fixes the multiplicative constant
uniquely and gives \eqref{explicit_modes}.

Fourier reconstruction of the smooth periodic function $F$ now proves
\eqref{Phiseries}; repeated integration by parts in $u$ proves the
stated compact-subset convergence, including derivatives. No global
solution is inferred from a formal mode sum.

Continuity of $D$ on the closed strip allows passage to the limit under
its Fourier integrals. Subtracting the two finite parts gives
\begin{equation}\label{dnaxis}
 d_n(0)=\frac{\sqrt2\epsn}{\pi}
 \left[\psi\left(n+\frac12\right)
 -\frac{\psi(\alpha_n)+\psi(\beta_n)}2
 +\frac12\log\frac{1-k}{4}\right].
\end{equation}
Denote the bracket in \eqref{dnaxis} by $T_n$. Only symmetric functions
of $\alpha_n$ and $\beta_n$ enter it, and
\begin{equation}\label{symmetric}
 \alpha_n+\beta_n=n+1,\qquad
 \alpha_n\beta_n=\frac{\Delta_n}{4},\qquad
 \Delta_n=(1-k)n^2+2n+1-S .
\end{equation}
For $k\ne0$ both parameters tend to infinity inside the sector
$|\arg x|<\pi$, where
$\psi(x)=\log x-1/(2x)-1/(12x^2)+1/(120x^4)+O(x^{-6})$. Hence
\begin{equation}
 \begin{split}
 T_n={}&\log\Bigl(n+\tfrac12\Bigr)-\frac12\log\frac{\Delta_n}{1-k}
       -\frac1{2n+1}-\frac1{3(2n+1)^2}+\frac{2}{15(2n+1)^4}\\
       &+\frac{n+1}{\Delta_n}
        +\frac{2(n+1)^2-\Delta_n}{3\Delta_n^2}
        -\frac{4\bigl[\bigl(2(n+1)^2-\Delta_n\bigr)^2-\Delta_n^2/2\bigr]}
              {15\Delta_n^4}+O_k(n^{-6}).
 \end{split}
\end{equation}
Expanding in powers of $1/n$, the terms of order $1$, $n^{-1}$ and
$n^{-3}$ cancel identically, the coefficient of $n^{-2}$ is $w(0)/2$ with
$w(0)$ given by \eqref{waxis}, and
\begin{equation}\label{tail}
 T_n=\frac{w(0)}{2n^2}+\frac{t_4}{n^4}+O_k(n^{-6}),\qquad
 d_n(0)=\frac{\sqrt2\epsn}{\pi}\,T_n ,
\end{equation}
\begin{equation}\label{b4}
 t_4=\frac{240S^2(1-k)^2+160S(k+3)(k-1)-7k^4+28k^3+70k^2+700k+105}
          {960(1-k)^4}.
\end{equation}
For $k=0$ one has $T_n\equiv0$, consistently with $w(0)=t_4=0$.
The resulting axis series is absolutely and uniformly convergent.
It has the Fourier coefficients of the continuous function $D(0,u)$,
so Fourier uniqueness identifies it with that function, including at
$u=0$. This proves the mass evaluation without assuming uniform
convergence of the original singular series near the point at infinity.
\end{proof}

At the axis the original series \eqref{Phiseries} should not be evaluated
term by term: individual modes have logarithmic singularities. One uses
\eqref{correctedseries} and \eqref{dnaxis}. At the circle one uses the
smooth limiting extension. For $k=0$, uniqueness gives $\Phi\equiv1$
and $q\equiv0$; the modes reduce to the coefficients in \eqref{Heine}.

On the circle itself the conformal factor has a closed form. The
coefficient $f_0$ is the average of $\Phi/\sqrt d$ over $u$, and $\Phi$
is continuous at the circle, so its value there is the limit of
$\sqrt d\,f_0$ as $\nu\to\infty$. Since $d\,\sigma R/\sinh\nu\to1$,
\eqref{explicit_modes} and the reflection formula for the Gamma function
give
\begin{equation}\label{Phicircle}
 \Phi\big|_{\rho=a,\,z=0}
 =\frac{\Gamma(\alpha_0)\Gamma(\beta_0)}{\sqrt2\,\pi}
 =\frac1{\sqrt2\,\cos\bigl(\pi\sqrt S/2\bigr)} ,
\end{equation}
with $\cos(\pi\sqrt S/2)=\cosh(\pi\sqrt{|S|}/2)$ for $S<0$. This equals
$1$ at $k=0$, exceeds $1$ precisely for $k>0$, equals $1/\sqrt2$ for the
elementary member $k=-3$, and diverges as $k\uparrow k_c$.
Figure~\ref{figfields} shows $q$ and $\Phi$ in a meridional plane for
two members of the family.

\begin{figure}[tb]
\centering\includegraphics[width=\linewidth]{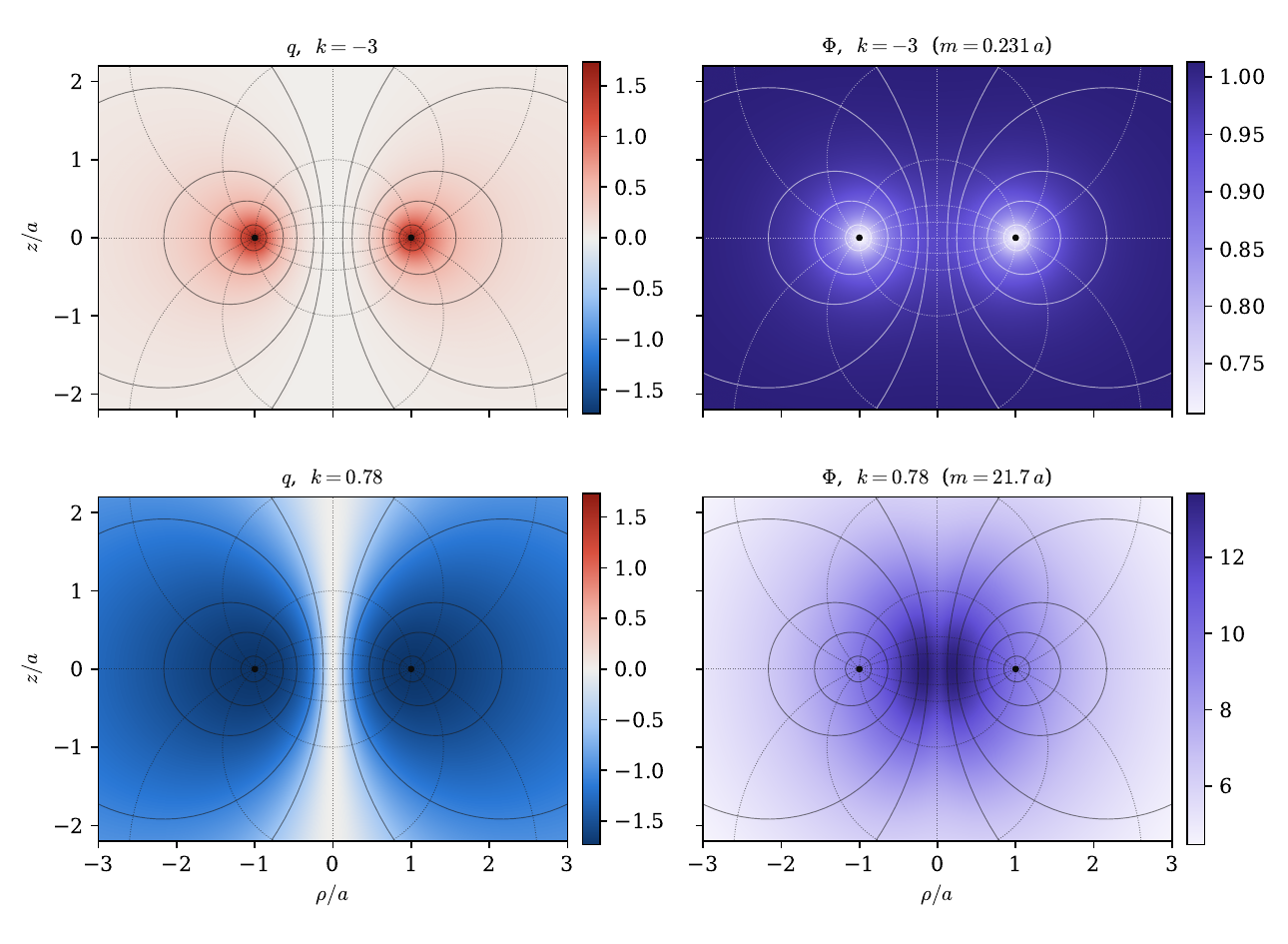}
\caption{The metric function $q$ (left) and the conformal factor $\Phi$
(right) in a meridional plane, with $\rho$ continued to negative values
across the axis, for the elementary member $k=-3$ (top) and for
$k=0.78$ (bottom), which contains trapped surfaces. Thin solid circles
are sections of the tori $\nu=0.25,0.5,1,1.5,2.5$, dotted circles are
surfaces $u=\mathrm{const}$, and the dots mark the circle $\rho=a$,
$z=0$. The function $q$ is constant on the tori and has opposite signs
on the two branches of the family; both rows use the same colour scale
for $q$, while each panel for $\Phi$ has its own. On the circle $\Phi$
takes the values \eqref{Phicircle}, $1/\sqrt2$ and $9.752$. For $k=0.78$
the outer marginally trapped sphere of Table~\ref{tablehorizons} lies
outside the frame.}\label{figfields}
\end{figure}

\section{ADM mass and its critical divergence}
The seed function decays as $r^{-2}$, so the ADM mass is the coefficient
$m$ in \eqref{AF}. Theorem~\ref{reconstruction} gives the explicit formula
\begin{equation}\label{mass}
 \boxed{\frac ma=\frac4\pi\sum_{n=0}^\infty\epsn
 \left[\psi\left(n+\frac12\right)
 -\frac{\psi(\alpha_n)+\psi(\beta_n)}2
 +\frac12\log\frac{1-k}{4}\right].}
\end{equation}
The series converges absolutely by \eqref{tail}. At $k=0$, each term
vanishes by the duplication formula for the digamma function.
A numerical truncation at $n=N-1$ can be accelerated by adding the known
tails from \eqref{tail}:
\begin{equation}\label{masstail}
 \frac ma=\frac4\pi\left[
    \sum_{n=0}^{N-1}\epsn T_n+w(0)\zeta(2,N)+2t_4\zeta(4,N)\right]
    +O_k(N^{-5}),
\end{equation}
where $T_n$ is the bracket in \eqref{mass}, $t_4$ is given by \eqref{b4}
and $\zeta(s,N)$ is the Hurwitz zeta function. With $N=500$ and $N=2000$
the results agree to eleven digits for the parameters of
Table~\ref{tablemass}. At very small $k$, cancellation calls for
higher precision rather than merely a larger $N$.

At the critical point, $\alpha_0\to0^+$ and all $n\ge1$ terms remain
regular. Since $\alpha_0=(1-\sqrt S)/2$ and
$\psi(\alpha_0)=-1/\alpha_0-\gamma+O(\alpha_0)$,
\begin{equation}\label{masspole}
 \frac ma=\frac8{\pi(1-S(k))}+O(1)
 =\frac8{\pi S'(k_c)}\frac1{k_c-k}+O(1)
 \quad(k\uparrow k_c).
\end{equation}
The $O(1)$ term requires uniform control of the sum over $n\ge1$, not
only regularity of each of its terms. Let $K\subset(0,1)$ be a closed
interval containing $k_c$ in its interior, on which $S\le2$. For $n\ge1$
and $k\in K$, \eqref{symmetric} gives
$\alpha_n\beta_n=\Delta_n/4\ge(1-k)n^2/4$ and $\beta_n\le n+2$, hence
$\beta_n\ge\alpha_n\ge c_Kn$ with $c_K>0$. The digamma expansion leading
to \eqref{tail} therefore has a remainder that is uniform on $K$, its
coefficients $w(0)$ and $t_4$ are continuous there, and
\begin{equation}\label{uniformtail}
 |T_n(k)|\le\frac{C_K}{n^2}\qquad(n\ge1,\ k\in K).
\end{equation}
Consequently $\sum_{n\ge1}\epsn T_n(k)$ converges uniformly on $K$ and is
continuous there; only the term $n=0$ is singular at $k_c$.

In accordance with \cite{BeigOM}, trapped surfaces appear before the
critical point is reached: this is proved in Section~\ref{sec_proof} and
examined numerically in Section~\ref{sec_trapped}.
This realizes explicitly the mass blow-up associated with loss of
positivity of the conformal operator \cite{BeigOM,BeigHusa}; the general
phenomenon is not new. For other seed functions the dependence of the
mass on the amplitude is known numerically, with fitted scaling
relations \cite{AranhaOliveira}. Here \eqref{mass} gives it exactly on
the whole interval of existence, from the quadratic regime
\eqref{weakmass} to the pole \eqref{masspole}.

\begin{table}[tb]
\centering
\begin{tabular}{rrrrr}
\toprule
$k$&$S(k)$&$w(0)$&$q_\infty$&$m/a$\\\midrule
$-50$&$-0.37967$&$0.08217$&$8.3646$&$1.02889842$\\
$-10$&$-0.13262$&$0.09607$&$3.3591$&$0.51552735$\\
$-3$&$0$&$0.12500$&$1.5828$&$0.23114878$\\
$-1$&$0.10610$&$0.13638$&$0.7219$&$0.08375681$\\
$-0.1$&$0.22683$&$0.04161$&$0.0955$&$0.00259299$\\
$0.1$&$0.27720$&$-0.06134$&$-0.1053$&$0.00390123$\\
$0.3$&$0.34981$&$-0.30129$&$-0.3575$&$0.06013333$\\
$0.5$&$0.46800$&$-0.98068$&$-0.7077$&$0.36598300$\\
$0.7$&$0.71070$&$-3.84396$&$-1.3046$&$3.20159994$\\
$0.78$&$0.90973$&$-8.04048$&$-1.7341$&$21.66364686$\\
\bottomrule
\end{tabular}
\caption{Representative profiles and masses. The values of $q_\infty$ follow
from \eqref{qinf}; the mass is evaluated from \eqref{mass}. The length scale $a$
does not affect these dimensionless quantities.}\label{tablemass}
\end{table}

\begin{figure}[tb]
\centering\includegraphics[width=\linewidth]{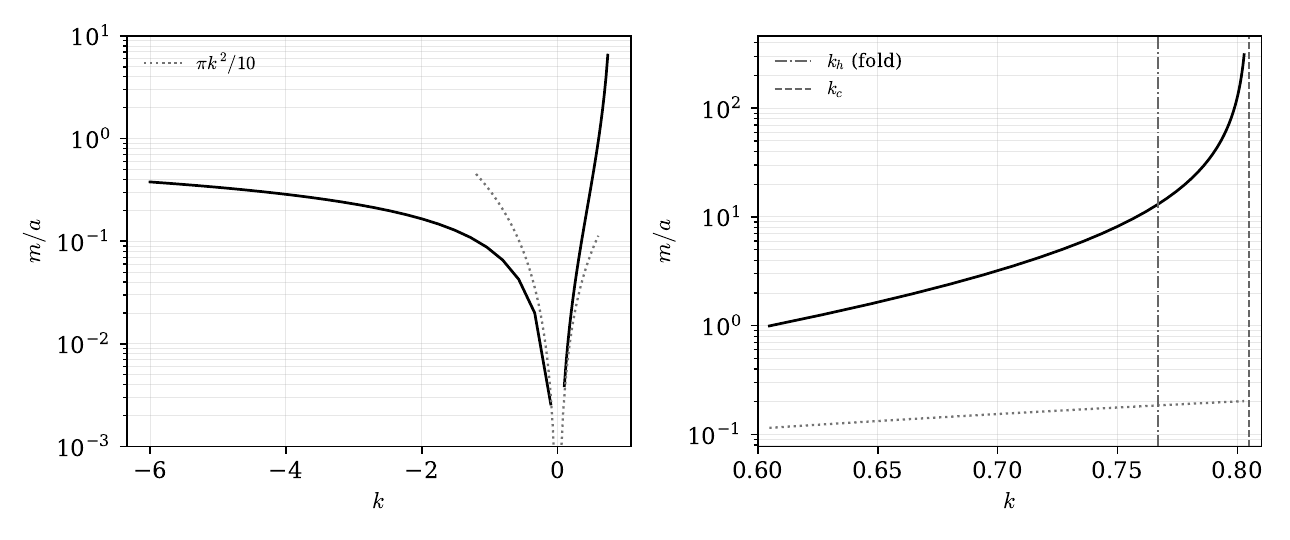}
\caption{The ADM mass of the physical branch. Left: both signs of $k$.
Right: approach to the critical point. The dotted curve is $\pi k^2/10$;
the vertical lines indicate the numerical fold of marginally trapped spheres and the
exact existence boundary.}\label{figmass}
\end{figure}

\section{An analytic small-amplitude expansion}
Expanding \eqref{Sdef} gives
$S=1/4+k/4+k^2/5+O(k^3)$. At fixed physical coordinates,
\begin{equation}\label{qweak}
 q=kq_1+O(k^2),\qquad
 q_1=-\tanh^2\nu=-\frac{4a^2\rho^2}{(r^2+a^2)^2}.
\end{equation}
Write $\Phi=1+k\phi_1+O(k^2)$, so that
$\Delta_3\phi_1=-\Delta_2q_1/4$. Put $\xi=r/a$ and
$P_2(\mu)=(3\mu^2-1)/2$. The unique regular decaying solution is
\begin{align}
 \phi_1&=U_0(\xi)+U_2(\xi)P_2(\cos\theta),\label{phiweak}\\
 U_0(\xi)&=\frac{\xi^2+1/3}{(1+\xi^2)^2},\\
 U_2(\xi)&=\frac{1+5\xi^2/3}{\xi^2(1+\xi^2)^2}
                         -\frac{\arctan\xi}{\xi^3}.
\end{align}
The apparent poles at $\xi=0$ cancel:
$U_0(0)=1/3$ and $U_2=-8\xi^2/15+O(\xi^4)$.
For verification, the monopole and quadrupole parts of
$a^2\Delta_2q_1/4$ are
\begin{equation}\label{Vweak}
 v_0=\frac{-2\xi^4+12\xi^2-2}{(1+\xi^2)^4},\qquad
 v_2=-\frac{16\xi^2}{(1+\xi^2)^4}.
\end{equation}
Substitution in the radial Poisson equations proves \eqref{phiweak}.

By \eqref{zeromean}, $\int V\,\dd^3x=0$.
Integrating the equation for $\log\Phi$ yields Brill's identity in this
normalization,
\begin{equation}\label{Brillmass}
 m=\frac1{2\pi}\int_{\R^3}|\nabla\log\Phi|^2\,\dd^3x\ge0.
\end{equation}
Analytic dependence of the coercive problem on $k$ near zero justifies
using the first variation in this integral. Hence
\begin{align}
 \frac ma&=c_2k^2+O(k^3),\\
 c_2&=2\int_0^\infty \xi^2\left(U_0v_0+\frac15U_2v_2\right)\dd\xi.
\end{align}
One integration by parts removes the $\arctan\xi$ term, giving
\begin{equation}
 c_2=\frac4{15}\int_0^\infty
 \frac{4-21\xi^2-21\xi^4+85\xi^6-15\xi^8}{(1+\xi^2)^6}\,\dd\xi
 =\frac\pi{10}.
\end{equation}
The final equality follows termwise from
$\int_0^\infty \xi^{2j}(1+\xi^2)^{-6}\dd\xi
=\frac12 B(j+1/2,11/2-j)$, for $0\le j\le4$. Thus
\begin{equation}\label{weakmass}
 \boxed{\frac ma=\frac\pi{10}k^2+O(k^3).}
\end{equation}
This coefficient is analytic, not inferred by fitting a numerical mass curve.

\section{Trapped surfaces near the critical parameter: an analytic proof}\label{sec_proof}
Let $H$ be the mean curvature of a closed surface in the initial
hypersurface, taken with respect to the outer normal. For time-symmetric
data the expansions of the outgoing and ingoing future-directed null
normals are
\begin{equation}\label{expansions}
 \theta_\pm=\pm H .
\end{equation}
Hence a surface is marginally trapped, $\theta_+=\theta_-=0$, precisely
when it is minimal, and it is outer trapped, $\theta_+<0$, when $H<0$.
Surfaces with both expansions negative cannot lie in a time-symmetric
hypersurface; in this setting ``trapped surfaces'' are therefore understood as the
marginally trapped (minimal) surfaces and the outer trapped surfaces they
enclose. An apparent
horizon is an outermost marginally trapped surface.

Beig and \'O Murchadha \cite{BeigOM} showed that sequences of
time-symmetric vacuum data develop outer trapped surfaces as the critical
amplitude, at which the conformal operator loses positivity, is
approached. For the present family the statement can be proved directly,
and every ingredient is explicit. Only the mode $n=0$ of
\eqref{Phiseries} is singular at $k_c$; the rescaled conformal factor
converges to the auxiliary function $h_{k_c}$ of Lemma~\ref{auxlemma};
and the limiting geometry contains explicit surfaces of negative mean
curvature.

Fix $k_1\in(0,k_c)$ and put $J=[k_1,k_c]$. The dependence on $k$ is
indicated by a subscript where needed. The functions $q_k$, $V_k$ and
$h_k$ are real-analytic jointly in the space variables and in $k$: the
right-hand side of \eqref{wformula} is analytic in $(\sigma^2,k)$, and
$\sigma(\nu;k)$ and the expansions at the circle used in the proof of
Proposition~\ref{regularity} depend analytically on $k$ through
\eqref{Rdef}. In particular these functions are bounded in $C^2$ on
compact subsets of $\R^3$, uniformly in $k\in J$, and
$w_*=\sup_{k\in J}\sup_{\nu\ge0}w_k(\nu)$ is finite. It is positive by
\eqref{balance}; numerically $w_*=1.604$, attained at $k=k_c$. The modes
\eqref{explicit_modes} with $n\ge1$ are defined and continuous in $k$ on
all of $J$: there $s_n^2=S+kn^2\le1+kn^2<(n+1)^2$, so that $\alpha_n>0$.

\begin{lemma}\label{modebounds}
Let $k\in J$, with $k<k_c$ if $n=0$.
\begin{enumerate}[label=\textup{(\alph*)},itemsep=1pt,topsep=3pt]
\item $f_n>0$ on $(0,\infty)$ for every $n\ge0$.
\item If $n^2>w_*$ and $\mu_n=\sqrt{n^2-w_*}$, then
 $f_n(\nu)\le(2\sqrt2/\pi)\,Q_{\mu_n-1/2}(\cosh\nu)$ for all $\nu>0$.
\item For every $\nu_1>0$ there is a constant $C$, depending only on
 $\nu_1$ and $J$, such that $\sqrt d\,\sum_{n\ge1}f_n(\nu)\le C$ for
 $\nu\ge\nu_1$ and all $u$.
\end{enumerate}
\end{lemma}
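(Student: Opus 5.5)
Throughout, write $f_n=g_n/\sqrt{\sinh\nu}$, so that $g_n$ satisfies the Schr\"odinger form \eqref{schrodinger}. The basic tool will be a Sturm comparison with the free Legendre modes $Q_{\mu-1/2}(\cosh\nu)$, which solve \eqref{modeeq} with $w=0$ and $n$ replaced by $\mu$.

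For (a), we argue by contradiction. The mode $f_n$ is the circle-regular solution, so it behaves like a positive multiple of $e^{-(n+1/2)\nu}$ as $\nu\to\infty$; the prefactor in \eqref{explicit_modes} is positive, since $\Gamma(\alpha_n)\Gamma(\beta_n)>0$ for $\alpha_n>0$, or is $|\Gamma(\alpha_n)|^2$ in the conjugate case. By \eqref{lognorm}, $f_n\to+\infty$ like $-\log\nu$ at the axis. Suppose $f_n$ had a zero. Let $\nu_0$ be its largest zero and consider $\eta=\sqrt d\,f_n(\nu)\cos(nu)$ restricted to $\nu>\nu_0$, extended by zero. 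This is a test function in $\D$, since it is supported away from infinity and decays at the circle. Using \eqref{sqrtd}, we get $E_k[\eta]=0$. The case $n=0$ is the cleaner one, because then $\eta$ has one sign. Equality in the ground-state identity \eqref{groundstate} would then force $(1-S)\int C_k\eta^2=0$, which contradicts $S<1$ for $k<k_c$. For $n\ge1$ the conclusion is weaker, because $\cos(nu)$ changes sign and $S\le1$ on $J$. Here I would instead compare \eqref{modeeq} for mode $n$ with mode $0$: we have $\tfrac14-n^2+w<\tfrac14+w$, so by Sturm comparison $f_n$ oscillates less than the positive ground function $\sqrt{\sigma R/\sinh\nu}$, which solves \eqref{modeeq} at $n=0$ up to the residual $(S-1)(1-\sigma^2)/R^2\le0$. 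A Wronskian argument on $(\nu_0,\infty)$, with the correct signs at the two ends, gives the contradiction. This is the step I expect to need the most care, in particular the boundary behaviour at the axis and the endpoint $k=k_c$, $n\ge1$.

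For (b), put $\tilde Q=(2\sqrt2/\pi)Q_{\mu_n-1/2}(\cosh\nu)$. This function satisfies \eqref{modeeq} with potential term $\tfrac14-\mu_n^2=\tfrac14-n^2+w_*\ge\tfrac14-n^2+w$. It is positive and decays like $e^{-(\mu_n+1/2)\nu}$, which is slower than $f_n$ at the circle. Since $\epsn=2$, both functions have the same coefficient $2\sqrt2/\pi$ of $-\log\nu$ at the axis. Take the Wronskian $W=\sinh\nu\,(f_n'\tilde Q-f_n\tilde Q')$. Then $W'=\sinh\nu\,(w_*-w)f_n\tilde Q\ge0$ by (a), and $W\to0$ at the circle, so $W\le0$. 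Hence $(f_n/\tilde Q)'\le0$. At the axis the ratio $f_n/\tilde Q$ tends to $1$ because the logarithmic coefficients agree. Since the ratio is nonincreasing, $f_n\le\tilde Q$ everywhere.

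For (c), fix $N$ with $N^2>w_*$. For $n\ge N$, apply (b). The asymptotics $Q_{\mu-1/2}(\cosh\nu)\le C(\nu_1)e^{-(\mu+1/2)\nu}$ hold uniformly for $\nu\ge\nu_1$, for instance from the integral representation. Since $\mu_n\ge n-C$, this gives $\sum_{n\ge N}f_n\le C e^{-\nu/2}\sum_n e^{-(n-C)\nu_1}$. Multiplied by $\sqrt d\le C e^{\nu/2}$, this is uniformly bounded. The finitely many modes $1\le n<N$ are continuous in $(\nu,k)$ on $[\nu_1,\infty]\times J$, using the explicit hypergeometric form and the decay at the circle, so they are bounded there as well. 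Here positivity from (a) allows the absolute value to be dropped.
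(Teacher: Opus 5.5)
Your proposal is correct. Part (b) is the paper's argument. Part (c) is essentially the paper's argument too. The paper writes your uniform bound as $Q_{\mu-1/2}(\cosh\nu)\le e^{-(\mu-\mu')\nu}Q_{\mu'-1/2}(\cosh\nu)$, obtained from Heine's integral. It makes the low-mode step explicit through $\sigma\ge\tanh\nu$ and the monotonicity of the Euler integral in $x$, which gives a bound at $x=\sech^2\nu_1$ that is continuous in $k$. Your appeal to continuity on $[\nu_1,\infty]\times J$ needs exactly this in order to be uniform up to the circle.

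The genuine difference is in (a).

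\emph{The paper's route.} It proves (a) in one line. On $J$ one has $S>0$, so $\alpha_n,\beta_n$ are real and positive. Euler's integral, or simply the positivity of every coefficient of ${}_2F_1(\alpha_n,\beta_n;n+1;x)$ for $0\le x<1$, together with the positive prefactors, gives $f_n>0$. You already use $\alpha_n>0$ to make the Gamma prefactor positive, so you had this argument available.

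\emph{Your route.} Your contradiction argument also works, but you misplace its difficulty.
\begin{itemize}
\item The energy argument does not need $\eta$ to have one sign.
\item At $S=1$, the equality case of \eqref{groundstate} forces $\eta/h_k$ to be constant. That constant is zero because $\eta$ has compact support.
\end{itemize}
So the energy argument alone covers every $n$ and all of $J$.

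Your Wronskian variant closes as well. Take $h=\sqrt{\sigma R/\sinh\nu}$ and $W=\sinh\nu\,(f_n'h-f_nh')$. Then $W'=\sinh\nu\,f_nh\,[n^2+(1-S)(1-\sigma^2)/R^2]>0$ on $(\nu_0,\infty)$. At the left end, $W(\nu_0)=\sinh\nu_0\,f_n'(\nu_0)h(\nu_0)>0$, while $W\to0$ at the circle. This is a contradiction, and the axis plays no role.

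\emph{Comparison.} Your route is independent of the hypergeometric representation. It uses only circle-regularity and the positive supersolution $h$, the same mechanism as in Theorem~\ref{existence}. The paper's route is much shorter.
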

\begin{proof}
(a) On $J$ one has $S>0$, so $s_n$ is real and $\alpha_n\le\beta_n$;
moreover $\alpha_n>0$, for $n=0$ because $S<1$. Euler's integral
\cite{DLMF} gives, for $0\le x<1$,
\begin{equation}\label{Euler}
 \frac{\Gamma(\alpha_n)\Gamma(\beta_n)}{\Gamma(n+1)}\,
 {}_2F_1(\alpha_n,\beta_n;n+1;x)
 =\int_0^1t^{\beta_n-1}(1-t)^{\alpha_n-1}(1-xt)^{-\alpha_n}\,\dd t>0 .
\end{equation}

(b) Put $g=(2\sqrt2/\pi)Q_{\mu_n-1/2}(\cosh\nu)$. It is positive and
satisfies \eqref{modeeq} with $w$ replaced by the constant $w_*$, that
is, $g''+\coth\nu\,g'=(\mu_n^2-\tfrac14)g$.
For $W=\sinh\nu\,(f_n'g-f_ng')$ this and \eqref{modeeq} give
\begin{equation}
 W'=\sinh\nu\,\bigl(w_*-w(\nu)\bigr)f_ng\ge0 .
\end{equation}
Both functions decay exponentially at the circle, so $W(\infty)=0$ and
therefore $W\le0$. Thus $f_n/g$ is non-increasing. Here $n\ge1$, so
$\epsn=2$; by \eqref{lognorm} and the corresponding expansion of the
Legendre function, both $f_n$ and $g$ equal
$-(2\sqrt2/\pi)\log\nu+O(1)$ as $\nu\to0$, so $f_n/g\to1$ there and
$f_n\le g$.

(c) Heine's integral
$Q_{\mu-1/2}(\cosh\nu)=\int_0^\infty
(\cosh\nu+\sinh\nu\cosh t)^{-\mu-1/2}\dd t$ \cite{DLMF}, together with
$\cosh\nu+\sinh\nu\cosh t\ge e^{\nu}$, shows that
$Q_{\mu-1/2}(\cosh\nu)\le e^{-(\mu-\mu')\nu}Q_{\mu'-1/2}(\cosh\nu)$
for $\mu\ge\mu'>0$. Let $n_0$ be the smallest integer with $n_0^2>w_*$.
Since $\mu_n\le n$, one has
$\mu_n-\mu_{n_0}=(n^2-n_0^2)/(\mu_n+\mu_{n_0})\ge n-n_0$, and part (b)
gives
\begin{equation}
 \sum_{n\ge n_0}f_n(\nu)\le\frac{2\sqrt2}{\pi}\,
      \frac{Q_{\mu_{n_0}-1/2}(\cosh\nu)}{1-e^{-\nu}} ,
\end{equation}
and $\sqrt d\,Q_{\mu-1/2}(\cosh\nu)=O(e^{-\mu\nu})$ is bounded on
$\nu\ge\nu_1$. For the finitely many modes with $1\le n<n_0$, note that
$R\le1$ in \eqref{Rdef} gives $\sigma\ge\tanh\nu$, that the left-hand
side of \eqref{Euler} increases with $x$, and that
$d\,\sigma R/\sinh\nu\le\coth(\nu/2)$. On $\nu\ge\nu_1$ therefore
\begin{equation}
 \sqrt d\,f_n\le\frac{\sqrt2}{\pi}\sqrt{\coth\frac{\nu_1}2}\;
 \frac{\Gamma(\alpha_n)\Gamma(\beta_n)}{\Gamma(n+1)}\,
 {}_2F_1\bigl(\alpha_n,\beta_n;n+1;\sech^2\nu_1\bigr),
\end{equation}
and the right-hand side is continuous in $k\in J$.
\end{proof}

\begin{proposition}\label{profile}
Put $\alpha_0=(1-\sqrt S)/2$ and $A_k=[\sqrt2\sin(\pi\alpha_0)]^{-1}$, so
that $A_k(1-S)\to2\sqrt2/\pi$ as $k\uparrow k_c$. For every $\nu_1>0$,
\begin{equation}\label{boundedremainder}
 \sup_{k_1\le k<k_c}\ \sup_{\nu\ge\nu_1}\,
     \bigl|\Phi_k-A_kh_k\bigr|<\infty ,
\end{equation}
and
\begin{equation}\label{profilelimit}
 \boxed{\frac{\Phi_k}{A_k}\longrightarrow h_{k_c}
 \quad\text{in }C^2_{\rm loc}(\R^3)\qquad(k\uparrow k_c).}
\end{equation}
\end{proposition}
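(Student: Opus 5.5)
The plan is to split the series \eqref{Phiseries} into its $n=0$ term and the rest. By Lemma~\ref{modebounds}(c), $\sqrt d\sum_{n\ge1}f_n\cos(nu)$ is bounded on $\nu\ge\nu_1$ uniformly in $k\in J$, since $f_n>0$ and $|\cos(nu)|\le1$. It therefore suffices to compare $\sqrt d\,f_0$ with $A_kh_k$.

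For $n=0$ we have $\alpha_0+\beta_0=1$, so \eqref{explicit_modes} reads
\begin{equation*}
 \sqrt d\,f_0=\frac{\Gamma(\alpha_0)\Gamma(1-\alpha_0)}{\sqrt2\pi}\,h_k\,{}_2F_1(\alpha_0,\beta_0;1;1-\sigma^2)
 =A_kh_k\,{}_2F_1(\alpha_0,\beta_0;1;1-\sigma^2),
\end{equation*}
using the reflection formula and the definition of $h_k$ in \eqref{auxh}. Hence
\begin{equation*}
 \sqrt d f_0-A_kh_k=A_kh_k\bigl[{}_2F_1(\alpha_0,\beta_0;1;x)-1\bigr],\qquad x=1-\sigma^2 .
\end{equation*}
The power series of ${}_2F_1-1$ has coefficients $(\alpha_0)_j(\beta_0)_j/(j!)^2$, each carrying a factor $\alpha_0$. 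Since $0<\alpha_0\le\tfrac12$ and $\beta_0\le1$ on $J$, for $x\le x_1<1$ we get
\begin{equation*}
 0\le{}_2F_1-1\le\alpha_0\sum_{j\ge1}\frac{(1)_{j-1}(1)_j}{(j!)^2}x_1^j\le C\alpha_0 .
\end{equation*}
Next, $\nu\ge\nu_1$ corresponds to $\sigma\ge\sigma_1>0$ uniformly in $k\in J$, so $x\le1-\sigma_1^2<1$. Moreover $A_k\alpha_0$ is bounded, because $A_k\sim1/(\sqrt2\pi\alpha_0)$. It remains that $h_k$ is bounded on $\nu\ge\nu_1$ uniformly in $k$: it tends to $1$ at the circle, and $\sqrt d$ is bounded there. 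Combining these gives \eqref{boundedremainder}. The limit $A_k(1-S)\to2\sqrt2/\pi$ follows from $1-S=(1-\sqrt S)(1+\sqrt S)=2\alpha_0(1+\sqrt S)$ and $\sin\pi\alpha_0\sim\pi\alpha_0$.

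For \eqref{profilelimit}, set $\Psi_k=\Phi_k/A_k-h_k$. Dividing the bound \eqref{boundedremainder} by $A_k\to\infty$ gives $\Psi_k\to0$ uniformly on $\nu\ge\nu_1$. The region $\nu<\nu_1$ is a neighbourhood of the axis including infinity, and it is not compact-free. Every compact set $K\subset\R^3$, however, meets the axis only on a bounded segment away from the point $(0,0)$. I would handle this as follows. $\Psi_k$ satisfies
\begin{equation*}
 (\Delta_3+V_k)\Psi_k=(1-S)C_kh_k ,
\end{equation*}
using \eqref{auxeq} and the fact that $\Phi_k$ solves the homogeneous equation. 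The right-hand side tends to $0$ uniformly on compacts, since $C_k$ and $h_k$ are uniformly bounded there.

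The main obstacle is a uniform $L^\infty$ bound near the axis, where individual modes have log singularities. My first choice avoids modes: apply the maximum principle on a ball $B$ enclosing $K$ after subtracting the already-controlled data. Once a bound $|\Psi_k|\to0$ holds on $\partial B\cap\{\nu\ge\nu_1\}$, it remains to cover $\partial B\cap\{\nu<\nu_1\}$. For this I would choose $B$ as a domain bounded by a torus $\nu=\nu_1$ pieced with a sphere. Since $\Psi_k$ and its equation are axisymmetric, the torus $\nu=\nu_1$ itself is a compact surface lying in the controlled region and enclosing the axis segment. Concretely, I take the complement of the solid torus: the bounded region inside the surface $\nu=\nu_1$ contains only the circle, so instead I use the compact region $\{\nu\le\nu_1\}\cap\{r\le\Lambda\}$. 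There the boundary piece $r=\Lambda$, $\nu<\nu_1$, must also be controlled, and I would do this by a bound of $\Phi_k/A_k$ at large $r$ via the Green identity or by comparison.

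Maximum principle applicability needs positivity of $-\Delta_3-V_k$ with a Dirichlet condition. This follows from \eqref{groundstate} with the positive supersolution $h_k$: write $\Psi_k=h_k\chi$ and use the maximum principle for the operator $h_k^{-2}\nabla\cdot(h_k^2\nabla\cdot)$. This yields $\Psi_k\to0$ in $L^\infty_{\rm loc}$. Uniform $C^2$ convergence then follows from interior Schauder estimates, since $V_k$ is uniformly $C^{0,\alpha}$ on compacts. Finally $h_k\to h_{k_c}$ in $C^2_{\rm loc}$ by joint analyticity in $k$.
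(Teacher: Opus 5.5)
Your proof of \eqref{boundedremainder} and of $A_k(1-S)\to2\sqrt2/\pi$ is essentially the paper's argument. You split off $n=0$, use $\beta_0=1-\alpha_0$ and the reflection formula, bound ${}_2F_1-1$ by a multiple of $\alpha_0$, and handle $n\ge1$ by Lemma~\ref{modebounds}(a),(c). Two small slips. First, the factorwise bound $(\alpha_0)_j\le\alpha_0(1)_{j-1}$ that your wording suggests is false. The product bound you display is nevertheless true, because pairing $(i+\alpha_0)(i-\alpha_0)\le i^2$ gives $(\alpha_0)_j(1-\alpha_0)_j\le\alpha_0\,j!\,(j-1)!$. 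Second, $\sqrt d$ is not bounded at the circle. The uniform bound on $h_k$ for $\nu\ge\nu_1$ comes instead from $h_k^2=d\,\sigma R/\sinh\nu\le\coth(\nu/2)$.

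\textbf{The gap is the near-axis bound needed for \eqref{profilelimit}.}
\begin{itemize}
\item A maximum principle needs control of $\Psi_k$ on the boundary of a domain enclosing the compact set. Every bounded domain containing an axis point has a boundary that meets the axis, where $\nu=0<\nu_1$. So \eqref{boundedremainder} never supplies all the boundary data.
\item Your region $\{\nu\le\nu_1\}\cap\{r\le\Lambda\}$ only moves the problem to the sphere $r=\Lambda$. For large $\Lambda$ that sphere lies entirely in $\{\nu<\nu_1\}$ and crosses the axis. Saying it is controlled ``via the Green identity or by comparison'' is not an argument.
\item Letting $\Lambda\to\infty$ does not rescue the weighted maximum principle built on $h_k$. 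Since $h_k\sim\sqrt2a/r$ while $\Psi_k\to1/A_k$, the quotient $\Psi_k/h_k$ grows linearly. That route would need a barrier on the unbounded region, uniform in $k$, which you do not construct.
\end{itemize}

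The missing ingredient is positivity, which the paper's proof rests on. The functions $v_k=\Phi_k/A_k$ are positive solutions of $(\Delta_3+V_k)v_k=0$ on all of $\R^3$, and $V_k$ is bounded on compact sets uniformly in $k$. Harnack's inequality therefore gives, on every ball $B$,
$\sup_Bv_k\le C\inf_Bv_k\le C\,v_k(p)$,
with $C$ independent of $k$. Here $p\in B$ is any point off the axis, and $v_k(p)\to h_{k_c}(p)$. This gives the uniform local bound with no boundary data at all.

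From there, interior Schauder estimates bound $v_k$ in $C^{2,\alpha}$ on compact sets, so the family is precompact in $C^2_{\rm loc}$. Every limit agrees with $h_{k_c}$ off the axis, hence everywhere by continuity. Your own Schauder step applied to $\Psi_k$ would finish equally well once this bound is available: equicontinuity together with $\Psi_k\to0$ on the dense off-axis set gives $\Psi_k\to0$ locally uniformly.
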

\begin{proof}
For $n=0$ one has $\beta_0=1-\alpha_0$ and
$\Gamma(\alpha_0)\Gamma(1-\alpha_0)=\pi/\sin(\pi\alpha_0)$, so
\eqref{explicit_modes} and \eqref{auxh} give exactly
\begin{equation}\label{zeromode}
 \sqrt d\,f_0=A_kh_k\,y_k,\qquad
 y_k={}_2F_1(\alpha_0,1-\alpha_0;1;1-\sigma^2).
\end{equation}
The coefficients of this hypergeometric series satisfy
$(\alpha_0)_j(1-\alpha_0)_j/(j!)^2\le\alpha_0$ for $j\ge1$, whence
$|y_k-1|\le\alpha_0(1-\sigma^2)/\sigma^2$. On $\nu\ge\nu_1$ one has
$\sigma\ge\tanh\nu_1$, independently of $k\in J$. Moreover $A_k\alpha_0\le1/(2\sqrt2)$, and $h_k$ is bounded
uniformly in $k$. Hence $|\sqrt d\,f_0-A_kh_k|$ is bounded on
$\nu\ge\nu_1$, uniformly in $k$. By Theorem~\ref{reconstruction} and
Lemma~\ref{modebounds}(a),
$|\Phi_k-\sqrt d\,f_0|\le\sqrt d\sum_{n\ge1}f_n$, and
Lemma~\ref{modebounds}(c) proves \eqref{boundedremainder}.

Since $A_k\to\infty$ and $h_k\to h_{k_c}$ locally uniformly,
\eqref{boundedremainder} shows that $v_k=\Phi_k/A_k$ converges to
$h_{k_c}$ uniformly on compact sets that do not meet the axis. The
functions $v_k$ are positive solutions of $\Delta_3v_k+V_kv_k=0$ on
$\R^3$, with $V_k$ bounded in $C^1$ on compact sets uniformly in $k$.
Harnack's inequality \cite{GT} bounds $\sup_Bv_k$ by a multiple of
$\inf_Bv_k$ on every ball $B$, with a constant independent of $k$; as
$v_k$ converges at points off the axis, the family is bounded on $B$. The
interior Schauder estimates \cite{GT} then bound it in $C^{2,\alpha}$ on
compact sets. Every sequence $k_j\uparrow k_c$ therefore has a
subsequence converging in $C^2_{\rm loc}(\R^3)$, and its limit coincides
with $h_{k_c}$ off the axis, hence everywhere by continuity. This proves
\eqref{profilelimit}.
\end{proof}

The constant $A_k$ is the value \eqref{Phicircle} of $\Phi_k$ on the
circle, where $h_k=1$. The bounded remainder in \eqref{boundedremainder} is consistent with the
$O(1)$ term in \eqref{masspole}: $8/[\pi(1-S)]-2\sqrt2A_k\to2/\pi$.
For a closed surface the mean curvature
with respect to $g=\Phi^4\hat g$ is
\begin{equation}\label{conformalH}
 H_g=\Phi^{-2}\bigl[H_{\hat g}+4\,\hat n(\log\Phi)\bigr],
\end{equation}
where $\hat g=e^{2q}(\dd\rho^2+\dd z^2)+\rho^2\dd\varphi^2$ is the seed
metric and $\hat n$ its unit outer normal. Only $\nabla\log\Phi$ enters,
so the constant $A_k$ drops out. Define the limiting metric and an
elementary function of $\sigma$, both at $k=k_c$, where $S=1$:
\begin{align}
 g_\infty&=h_{k_c}^4\bigl[e^{2q}(\dd\rho^2+\dd z^2)
            +\rho^2\dd\varphi^2\bigr],\label{ginfty}\\
 \mathcal T(\sigma)&=\frac{\dd}{\dd\nu}
     \log\frac{\sigma^2R^2e^{q}}{\sinh\nu}
 =\frac{3(1-k_c)^2(1-3\sigma^2)-2k_c(5-3k_c)\sigma^4}
        {3(1-k_c)\,\sigma R^3}
  +\frac4{\sqrt{k_c}}\atanh\frac{\sqrt{k_c}\,\sigma}{R}.\label{Tdef}
\end{align}
The second form follows from \eqref{Rdef} and \eqref{qprime}. One has
$\mathcal T=1/(\sqrt{1-k_c}\,\sigma)+O(\sigma)$ at the axis and
$\mathcal T(1)=-1$. Let $\sigma_*$ be the smallest zero of $\mathcal T$
and $\nu_*=\nu(\sigma_*)$; numerically
\begin{equation}\label{sigmastar}
 \sigma_*=0.593813,\qquad\nu_*=0.371001,\qquad
 a\coth(\nu_*/2)=5.4525\,a .
\end{equation}

\begin{theorem}\label{trapped}
\begin{enumerate}[label=\textup{(\roman*)},itemsep=1pt,topsep=3pt]
\item Let $\Sigma\subset\R^3$ be a closed embedded surface of class $C^2$
 whose mean curvature with respect to $g_\infty$ and the outer normal is
 negative. Then there is $k_\Sigma<k_c$ such that $\Sigma$ is outer
 trapped in the data \eqref{metric} for every $k\in(k_\Sigma,k_c)$.
\item This applies to every torus $\nu=\nu_0$ with $0<\nu_0<\nu_*$ and
 to every coordinate sphere $r=r_0$ with $r_0\ge a\coth(\nu_*/2)$.
\item Consequently, for every $k$ sufficiently close to $k_c$ the data
 contain outer trapped tori and outer trapped spheres, and they contain
 an apparent horizon: an outermost minimal surface, each component of
 which is a sphere, enclosing every surface of \textup{(i)} that is
 outer trapped for that $k$.
\end{enumerate}
\end{theorem}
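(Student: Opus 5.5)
Part (i) follows from the conformal formula \eqref{conformalH} together with the $C^2_{\rm loc}$ convergence of Proposition~\ref{profile}. Write $v_k=\Phi_k/A_k$. Since $\nabla\log\Phi_k=\nabla\log v_k$, \eqref{conformalH} gives
\[
 H_{g_k}=\Phi_k^{-2}\bigl[H_{\hat g_k}+4\,\hat n_k(\log v_k)\bigr].
\]
On the compact surface $\Sigma$, the seed metric $\hat g_k$ converges in $C^2$ to $\hat g_{k_c}$, because $q_k$ is analytic jointly in $k$. Hence $H_{\hat g_k}$ and $\hat n_k$ converge uniformly on $\Sigma$. Also $v_k\to h_{k_c}>0$ in $C^1$ on a neighbourhood of $\Sigma$. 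The bracket therefore converges uniformly to $h_{k_c}^2H_{g_\infty}$, which is negative on $\Sigma$ by hypothesis, and so is bounded above by a negative constant. For $k$ near $k_c$ the bracket is then negative, and since $\Phi_k>0$ (Theorem~\ref{existence}), $H_{g_k}<0$. This is (i).

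For (ii) we compute $H_{g_\infty}$ explicitly.

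\emph{Tori.} By \eqref{V} with $\Phi=h_{k_c}$, the metric $g_\infty$ is $\Omega^2[e^{2q}(\dd\nu^2+\dd u^2)+\sinh^2\nu\,\dd\varphi^2]$ with $\Omega^2=a^2h^4/d^2=a^2\sigma^2R^2/\sinh^2\nu$. On the torus $\nu=\nu_0$ the outer normal (pointing to infinity) is $-\partial_\nu$ up to normalization. The $u$- and $\varphi$-directions contribute the mean-curvature term $-e^{-q}\Omega^{-1}\partial_\nu\log\bigl(\Omega^2\sinh\nu\bigr)$. The claim to verify is that this is a positive multiple of $-\partial_\nu\log(\sigma^2R^2e^q/\sinh\nu)=-\mathcal T$. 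The $e^{q}$ factor must appear, through the $u$-direction area element $\Omega e^{q}\,\dd u$, since the mean curvature is the $\nu$-derivative of $\log$ of the area element $\Omega e^q\cdot\Omega\sinh\nu$ divided by the normal length $\Omega e^q$. Indeed $\log(\Omega^2e^q\sinh\nu)=\log(a^2\sigma^2R^2e^q/\sinh\nu)$, which matches \eqref{Tdef} exactly. So with respect to the outer normal, $H_{g_\infty}=+e^{-q}\Omega^{-1}\mathcal T$ up to sign conventions. I would check the orientation carefully, because the smaller torus encloses the circle and its outer normal points toward decreasing $\nu$. The conclusion to aim for is that $H<0$ exactly where $\mathcal T>0$, i.e.\ for $\nu_0<\nu_*$. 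The asymptotic $\mathcal T\sim1/(\sqrt{1-k_c}\sigma)>0$ near the axis makes this sign choice consistent: large tori near infinity are outer trapped in $g_\infty$, because $h\sim\sqrt2a/r$ makes $g_\infty$ an inverted, asymptotically conical metric.

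\emph{Spheres.} For $r=r_0\ge a\coth(\nu_*/2)$, I would use \eqref{toroidalrelations}: the sphere lies in $\nu\le\nu_{\max}$ with $\tanh\nu_{\max}=2ar_0/(r_0^2+a^2)$, i.e.\ $\nu_{\max}=2\,\mathrm{arccoth}(r_0/a)\le\nu_*$. On it, $h_{k_c}$ behaves like the inverted flat metric, and the $\nu$-derivative terms dominate with the same sign as $\mathcal T$. The honest route is a direct computation. Write $\log\Phi$ and $q$ as functions of $\nu$, so that $\hat n(\log h)$ and $H_{\hat g}$ reduce to $\partial_r\nu$ times $\nu$-derivatives plus the flat sphere term $2/r$. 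Then show that the total equals $(\partial_r\nu)\,\times$ a positive multiple of $\mathcal T$ plus a term of definite sign, using $\partial_r\nu\le0$ outside $r=a$. The endpoint $r_0=a\coth(\nu_*/2)$ touches $\nu_*$ only on the equator, where the remaining term should supply strict negativity. This sphere computation is the main obstacle, because the $e^{q}$ weighting and $\hat n$ are not aligned with $\partial_\nu$. If a clean identity fails, I would instead verify negativity via a comparison argument on the region between the sphere and the torus $\nu=\nu_*$.

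For (iii), fix one torus and one sphere from (ii) and apply (i). By monotonicity this gives an interval of $k$ near $k_c$ in which both are outer trapped. Existence of an apparent horizon then follows from the standard result for time-symmetric asymptotically flat data. Given an outer trapped surface, minimize area in its homology class in the exterior region, or equivalently take the outermost minimal surface via the Meeks–Simon–Yau and Huisken–Ilmanen theory; the outer trapped surface serves as a barrier, and the large coordinate spheres are mean-convex near infinity by \eqref{AF}. The outermost minimal surface has only spherical components, and it encloses every outer trapped surface by the maximum principle.
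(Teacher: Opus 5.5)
\textbf{The sphere case of (ii) is a genuine gap.} The rest matches the paper: your part (i) is the paper's argument. Your torus computation agrees with \eqref{torusH} once you fix the orientation. With outer normal $-(\Omega e^{q})^{-1}\partial_\nu$ it gives $H=-(\Omega e^{q})^{-1}\mathcal T$, where $\Omega=ah_{k_c}^2/d$, so drop the ``$+$, up to sign conventions''.

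For spheres you leave only a plan, and the plan as stated cannot work. It treats $\log h_{k_c}$ as a function of $\nu$. In fact $h_{k_c}=\sqrt d\,(\sigma R/\sinh\nu)^{1/2}$ depends on $u$ through $\sqrt d$. If you drop $\sqrt d$, the part of the bracket not proportional to $\mathcal T$ is $2+\cosh\nu\cos u>0$: the flat term plus the correction from rewriting in terms of $\mathcal T$. With that sign, no conclusion $H<0$ is possible. Your fallback, a comparison with the torus $\nu=\nu_*$, does not replace the computation. Maximum-principle comparisons constrain surfaces that touch; they do not give the pointwise sign of $H$ on a prescribed sphere.

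\textbf{The missing computation.} The key ingredient is the dilation field in toroidal coordinates. From $\rho+iz=a\coth[(\nu-iu)/2]$,
\[
 r\partial_r=-\sinh\nu\cos u\,\partial_\nu-\cosh\nu\sin u\,\partial_u,
 \qquad
 r\partial_r\log d=-(1+\cosh\nu\cos u).
\]
With $\hat n=e^{-q}\partial_r$ and $H_{\hat g}=e^{-q}(\partial_rq+2/r)$, \eqref{conformalH} gives $H=h_{k_c}^{-2}e^{-q}r_0^{-1}\Xi$, where $\Xi=4r\partial_r\log h_{k_c}+r\partial_rq+2$. The contribution $2r\partial_r\log d$ cancels the $+2$. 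Using $2\partial_\nu\log(\sigma R/\sinh\nu)+q'=\mathcal T-\coth\nu$,
\[
 \Xi=-\cos u\bigl[\cosh\nu+\sinh\nu\,\mathcal T(\sigma)\bigr]
 =(r\partial_r\nu)\,\mathcal T-\cos u\cosh\nu ,
\]
which is \eqref{spheresign}. So the shape you guessed is right. The definite-sign term, however, is $-\cos u\cosh\nu$, which is negative because $\cos u>0$ for $r>a$, and it appears only because of the $\sqrt d$ factor. Since $\nu\le\nu_*$ on the sphere, $\mathcal T\ge0$ there. Hence $\Xi<0$ everywhere, including at the equator when $r_0=a\coth(\nu_*/2)$.

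\textbf{Smaller points on (iii).} No monotonicity is needed; just take the larger of the two values $k_\Sigma$. Also, the horizon enclosing every outer trapped $\Sigma$ is not a maximum-principle fact. Minimizing area between $\Sigma$ and a large mean-convex sphere gives a minimal surface enclosing $\Sigma$, so $\Sigma$ lies in the trapped region. Huisken--Ilmanen's Lemma~4.1 then identifies the boundary of that region as the outermost minimal surface, with spherical components. Finally, $g_\infty$ is not asymptotically conical: by Remark~\ref{limitgeometry} it closes up smoothly at infinity.
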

\begin{proof}
(i) By \eqref{conformalH}, $\Phi_k^2H_{g_k}
=H_{\hat g_k}+4\hat n_k(\log(\Phi_k/A_k))$ on $\Sigma$. By
\eqref{profilelimit} and the continuity of $q_k$ in $k$, the right-hand
side converges uniformly on the compact surface $\Sigma$ to
$H_{\hat g_{k_c}}+4\hat n_{k_c}(\log h_{k_c})=h_{k_c}^2H_{g_\infty}<0$.

(ii) The torus $\nu=\nu_0$ bounds the solid torus $\nu>\nu_0$ around the
circle, so its outer normal points towards decreasing $\nu$. By
\eqref{V} its area element is
$a^2\Phi^4d^{-2}e^{q}\sinh\nu\,\dd u\,\dd\varphi$, and
$h_{k_c}^4d^{-2}=\sigma^2R^2/\sinh^2\nu$ by \eqref{auxh}. Hence, in
$g_\infty$,
\begin{equation}\label{torusH}
 H=-\frac{d\,e^{-q}}{a\,h_{k_c}^2}\,\mathcal T(\sigma),
\end{equation}
which is negative for $\sigma<\sigma_*$; it does not change sign along
the torus. For the sphere $r=r_0$, \eqref{conformalH} gives
$H=h_{k_c}^{-2}e^{-q}r_0^{-1}\,\Xi$ with
$\Xi=4r\partial_r\log h_{k_c}+r\partial_rq+2$. In toroidal
coordinates the dilation field is
$r\partial_r=-\sinh\nu\cos u\,\partial_\nu-\cosh\nu\sin u\,\partial_u$,
and a short calculation with \eqref{auxh} gives
\begin{equation}\label{spheresign}
 \Xi=-\cos u\,\bigl[\cosh\nu+\sinh\nu\,\mathcal T(\sigma)\bigr].
\end{equation}
Since $r^2/a^2=(\cosh\nu+\cos u)/(\cosh\nu-\cos u)$, one has $\cos u>0$
for $r>a$. On the sphere $r=r_0$ the coordinate $\nu$ is largest on the
equator, where $\tanh\nu=2ar_0/(r_0^2+a^2)$; this does not exceed
$\tanh\nu_*$ when $r_0\ge a\coth(\nu_*/2)$. Then
$\sinh\nu\,\mathcal T\ge0$ on the sphere, with the limiting value $1$ at
its poles, and $\Xi\le-\cos u\cosh\nu<0$.

(iii) Let $\Sigma$ be as in (i) and $k\in(k_\Sigma,k_c)$. Large
coordinate spheres have $H>0$ by \eqref{AF} and \eqref{q_asym}. The
compact region between $\Sigma$ and such a sphere has an outer trapped
inner boundary and an outer untrapped outer boundary, so it contains a
stable minimal surface enclosing $\Sigma$. This follows by minimizing
area between the two barriers, and is the time-symmetric case of the
existence theorem for marginally outer trapped surfaces
\cite{AnderssonMetzger,Eichmair}; no condition on the topology of the
inner boundary is needed, so that $\Sigma$ may be a torus. The data
therefore contain compact minimal surfaces. By
\cite[Lemma~4.1]{HuiskenIlmanen} the trapped region, which contains
every compact minimal surface and the bounded components of its
complement, and hence $\Sigma$, is bounded by an outermost minimal
surface whose components are spheres; this surface minimizes area among
the surfaces enclosing it and is therefore stable.
\end{proof}

\begin{remark}\label{limitgeometry}
By \eqref{profilelimit}, the rescaled metrics $A_k^{-4}g_k$ converge on
compact sets to $g_\infty$, which is scalar-flat. The inversion
$r\mapsto a^2/r$, that is $u\mapsto\pi-u$, is an isometry of $g_\infty$,
because $h_{k_c}$ is mapped to $(r/a)h_{k_c}$ and the seed metric to
$(a/r)^4$ times itself. Hence $g_\infty$ extends smoothly over the point
at infinity to a scalar-flat metric on $S^3$, and the sphere $r=a$ is
totally geodesic, in particular minimal. The wave thus nearly pinches
off a closed space, joined to the asymptotically flat region by a neck
at $r\approx m/2$ (Section~\ref{sec_trapped}) that recedes to infinity
in these coordinates. A numerical evaluation gives
$\cosh\nu+\sinh\nu\,\mathcal T>0$ for all $\nu$ and a single zero of
$\mathcal T$; by \eqref{spheresign} and \eqref{torusH} every sphere
$r_0>a$ is then outer trapped in $g_\infty$, and $\nu=\nu_*$ is its only
minimal torus of this form, the counterpart of the marginally outer
trapped tori of \cite{Husa96}. The theorem uses only the elementary
facts stated in its proof. The parameter $k_\Sigma$ depends on the
surface, and the theorem gives no estimate of the size of the left
neighbourhood of $k_c$ in which trapped surfaces exist.
\end{remark}

\section{Trapped surfaces: a restricted numerical search}\label{sec_trapped}
A minimal surface
found in a prescribed ansatz need not be globally outermost. Here the
search is restricted to equatorially symmetric surfaces of revolution that
are radial graphs $r=r(\theta)$; below, ``minimal sphere'' and
``marginally trapped sphere'' are used interchangeably.

The area functional is
\begin{equation}
 \mathcal A=2\pi\int\Phi^4e^q\rho\,\dd l
 =2\pi\int_0^\pi\Phi^4e^q r^2\sin\theta
                   \sqrt{1+s'^2}\,\dd\theta,\qquad s=\log(r/a).
\end{equation}
Its Euler--Lagrange equation is
\begin{equation}\label{horizonode}
 s''=(1+s'^2)(\mathcal F_s-\mathcal F_\theta s'),\qquad
 \mathcal F=4\log\Phi+q+2s+\log\sin\theta,
\end{equation}
with
\begin{equation}\label{horizonbc}
 s'(0)=s'(\pi/2)=0,\qquad s''(0)=\tfrac12\mathcal F_s|_{\theta=0}.
\end{equation}
The $q$ contribution is retained. For each $s_0=s(0)$, integration gives
the mismatch $\mu(s_0;k)=s'(\pi/2)$. A fold satisfies
$\mu=\partial_{s_0}\mu=0$.

\subsection{Evaluation near the axis}
The regularized representation \eqref{correctedseries} is used. The leading
local behavior is
\begin{equation}\label{cusp}
 D=D(0,0)-\frac{w(0)}{\sqrt2}\sqrt{\nu^2+u^2}
                                  +O(\nu^2+u^2).
\end{equation}
It follows from the next asymptotic term \eqref{nextorder}, proved in
Appendix~\ref{app_asym}, or directly from \eqref{F_eq}. Subtracting
$M=-w(0)\sqrt d$ removes this conical term. If $m_n$ are its cosine
coefficients, then $e_n=d_n-m_n$ have $e_n(0)=O(n^{-4})$.
At $\nu=0$,
\begin{equation}
 m_0=-\frac{2\sqrt2w(0)}\pi,\qquad
 m_n=\frac{4\sqrt2w(0)}{\pi(4n^2-1)}\quad(n\ge1).
\end{equation}
For $\nu>0$ the same coefficients follow by multiplying the Heine series
by $\cosh\nu-\cos u$; in the accompanying code they are computed by a
fast Fourier transform of $M$.

The zero-balanced connection series behind \eqref{lognorm} is numerically
reliable only for $n\sigma\lesssim5$: its terms grow like $e^{n\sigma}$,
whereas the mode decays like $e^{-n\sigma}$. The modes are therefore
tabulated by integrating \eqref{modeeq} from the circle towards the axis,
which is the stable direction, on a logarithmic grid
$10^{-4}\le\nu\le14$, and are normalized with the connection series at a
point where $n\sigma\approx\tfrac12$. The tables agree with
\eqref{explicit_modes}, evaluated in 30-digit arithmetic, to a relative
accuracy of $10^{-9}$ where the modes are of order one and of $10^{-6}$
in their exponentially small tails. The Legendre functions in
\eqref{correctedseries} are obtained by the same procedure with $k=0$,
and $q$ by integrating $w$ twice. At test points with $\nu$ between
$0.7$ and $2$ the resulting $\Phi$ agrees with the direct summation of
\eqref{Phiseries} to $10^{-9}$.

\subsection{Numerical findings and their scope}
The shooting implementation uses $N=1200$ modes after cusp subtraction.
It starts at $\theta_1=0.01$ with the Taylor values
$s=s_0+\tfrac12s''(0)\theta_1^2$, $s'=s''(0)\theta_1$, and uses an adaptive
Runge--Kutta integrator with relative tolerance $10^{-9}$; derivatives
of $\mathcal F$ are central differences with step $10^{-5}$. Applied to
the Schwarzschild conformal factor $1+m/(2r)$, the same code returns
$r=m/2$ with a relative error of $2\times10^{-11}$.

For every $k$ examined, $\mu(s_0;k)$ is positive for small and for large
$s_0$ and has a single minimum, which decreases with $k$. A fold of
minimal spheres is found at
\begin{equation}\label{fold}
 k_h\simeq0.767019,\qquad m(k_h)/a\simeq13.1286.
\end{equation}
At the fold $\mu=\partial_{s_0}\mu=0$. It is an ordinary turning point:
central differences give
\begin{equation}\label{nondegenerate}
 \partial_k\mu=-21.81,\qquad \partial_{s_0}^2\mu=0.834
\end{equation}
there, with steps $10^{-4}$ to $4\times10^{-4}$ in $k$ and $0.01$ to
$0.04$ in $s_0$ giving the same digits. Both are non-zero, so the fold is
non-degenerate and two solutions exist for $k$ slightly above $k_h$ and
none slightly below.

Table~\ref{tableconvergence} separates the two sources of numerical
error. The truncation of the mode sum is converged to the digits shown
for $N\ge400$; with $N=60$ or $80$ the sixth significant digit of the
radii and of the area is not yet settled. The integrator
tolerance has no visible effect at fixed $N$. The values in \eqref{fold}
are also unchanged for $\theta_1=0.005$ and $0.02$. The position $r(0)$
of the marginal surface is determined less accurately than $k_h$, because
$\mu$ is stationary in $s_0$ there; it is quoted as $3.424$, with an
uncertainty of about $2\times10^{-4}$. These are convergence tests of a
numerical calculation. They are not interval certificates or proofs of
global outermostness.

\begin{table}[tb]
\centering
\begin{tabular}{rrrrrrr}
\toprule
 &\multicolumn{2}{c}{fold}&\multicolumn{2}{c}{$k=0.78$}&\multicolumn{2}{c}{$k=0.80$}\\
$N$&$k_h$&$r(0)/a$&$r(\pi/2)/a$&$\mathcal A/a^2$&$r(\pi/2)/a$&$\mathcal A/a^2$\\\midrule
$40$&$0.7670210$&$3.4232$&$9.355980$&$23361.147$&$63.22536$&$810771.60$\\
$60$&$0.7670187$&$3.4249$&$9.355059$&$23360.981$&$63.22450$&$810804.23$\\
$80$&$0.7670185$&$3.4243$&$9.354951$&$23361.017$&$63.22372$&$810808.65$\\
$200$&$0.7670185$&$3.4240$&$9.355008$&$23361.015$&$63.22323$&$810808.20$\\
$400$&$0.7670185$&$3.4240$&$9.355009$&$23361.014$&$63.22323$&$810808.23$\\
$800$&$0.7670185$&$3.4239$&$9.355009$&$23361.014$&$63.22323$&$810808.23$\\
$1200$&$0.7670185$&$3.4239$&$9.355009$&$23361.014$&$63.22323$&$810808.23$\\
$1600$&$0.7670185$&$3.4240$&$9.355009$&$23361.014$&$63.22323$&$810808.23$\\\midrule
\multicolumn{7}{l}{$N=1200$, integrator tolerance $10^{-7}$ (difference step $10^{-4}$):}\\
 &$0.7670185$&$3.4240$&$9.355009$&$23361.014$&$63.22323$&$810808.23$\\
\multicolumn{7}{l}{$N=1200$, integrator tolerance $10^{-11}$ (difference step $10^{-6}$):}\\
 &$0.7670185$&&$9.355009$&$23361.014$&&\\
\bottomrule
\end{tabular}
\caption{Convergence of the fold parameter, of the axis radius of the
marginal surface, and of the equatorial radius and area of the
outer-branch surfaces at $k=0.78$ and $k=0.80$. Upper part: number of
modes $N$ at integrator tolerance $10^{-9}$. Lower part: integrator
tolerance at fixed $N$; entries not computed are left blank.}\label{tableconvergence}
\end{table}

Two radial-graph solutions occur on the studied side above this fold.
The larger solution is an outer-branch marginally trapped sphere in this
ansatz. It bounds a trapped region: for $k=0.77$, $0.78$ and $0.80$ the
surfaces obtained from it by the radial displacements
$\delta\log r=-0.01$ and $-0.05$ have $H<0$ at every point, so they are
outer trapped, while the displacements $+0.01$ and $+0.05$ give $H>0$
everywhere.
Table~\ref{tablehorizons} records representative geometrical quantities,
and Fig.~\ref{fighorizons} shows three of the surfaces. They are slightly
prolate in the coordinates $(\rho,z)$ and lie far from the circle
$\rho=a$: the largest value of $\nu$ on the marginal surface is $0.68$.

\begin{table}[tb]
\centering
\begin{tabular}{rrrrrr}
\toprule
$k$&$m/a$&$r(0)/a$&$r(\pi/2)/a$&$\mathcal A/a^2$
 &$\sqrt{\mathcal A/(16\pi)}/m$\\\midrule
$0.767019$&$13.1286$&$3.424$&$3.058$&$8.174\times10^3$&$0.97134$\\
$0.77$&$14.5151$&$5.271$&$4.802$&$1.0188\times10^4$&$0.98081$\\
$0.78$&$21.6636$&$9.864$&$9.355$&$2.3361\times10^4$&$0.99513$\\
$0.79$&$38.6505$&$18.919$&$18.490$&$7.4994\times10^4$&$0.99936$\\
$0.80$&$127.007$&$63.425$&$63.223$&$8.1081\times10^5$&$0.999993$\\
\bottomrule
\end{tabular}
\caption{Numerical outer-branch marginally trapped (minimal) spheres in the radial-graph search:
ADM mass, coordinate radius on the axis and in the equatorial plane, area,
and the ratio of $\sqrt{\mathcal A/(16\pi)}$ to the ADM mass. The first row
is the marginal surface at the fold \eqref{fold}.}\label{tablehorizons}
\end{table}

\begin{figure}[tb]
\centering\includegraphics[width=0.6\linewidth]{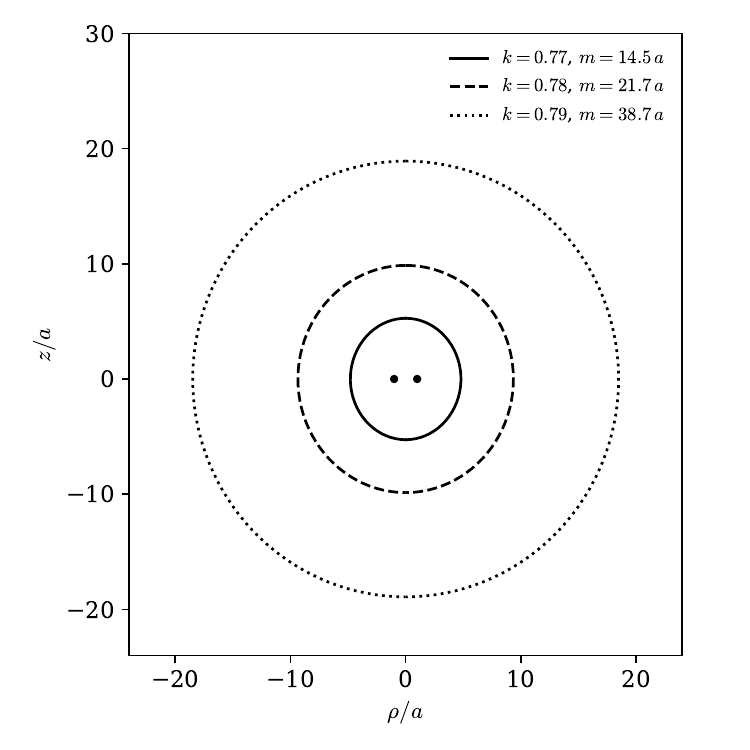}
\caption{Outer-branch marginally trapped spheres of the radial-graph search in the
$(\rho,z)$ plane for three values of $k$. The dots mark the circle
$\rho=a$, $z=0$.}\label{fighorizons}
\end{figure}

The numerical area ratios are compatible with the Riemannian Penrose
inequality \cite{HuiskenIlmanen,Bray}, tested earlier on numerical Brill
waves in \cite{KKS}; the deficit is largest at the fold,
where it is $2.9\%$. The theorem concerns the appropriate
outermost horizon; agreement of these ratios is a consistency check and does
not itself establish outermostness. As $k\uparrow k_c$, the calculated outer
branch approaches $r=m/2$ and becomes nearly spherical in these coordinates.

The surfaces of Theorem~\ref{trapped} can be followed in the exact data
well below $k_c$. Coordinate spheres $r=r_0$ are outer trapped for
$k\gtrsim0.777$, first at $r_0\approx4.2\,a$, and tori $\nu=\nu_0$ for
$k\gtrsim0.800$, first at $\nu_0\approx0.17$; at $k=0.804$ the tori with
$0.05\le\nu_0\le0.3$ are outer trapped, to be compared with
$\nu_*=0.371$ in \eqref{sigmastar}. The inner radial-graph solution
approaches the sphere $r=a$, which is minimal in the limiting metric
\eqref{ginfty} (Remark~\ref{limitgeometry}): its axis and equatorial
radii are $(1.630,1.498)\,a$ at $k=0.78$, $(1.082,1.063)\,a$ at $k=0.80$
and $(1.013,1.010)\,a$ at $k=0.804$.

For comparison, for conformally flat data whose conformal factor is one
plus the Newtonian potential of a thin ring of radius $a$, the same code
gives the corresponding fold at $m\simeq2.03\,a$. The field of the
present, spatially extended wave is much less compact.

The threshold \eqref{fold} is not asserted to be the first occurrence of
any marginally trapped surface in the complete manifold. Non-star-shaped surfaces,
other topologies, such as the toroidal trapped surfaces studied in
\cite{Husa96,KarkowskiTori}, and the $k<0$ branch have not been
exhaustively searched.
In particular, no absence theorem for horizons below $k_h$ or for $k<0$
is claimed.

\section{Discussion}
The construction provides complete analytic Brill data on $\R^3$ with
explicit hypergeometric Fourier modes and an explicit convergent mass
series. The free function is a closed-form combination of elementary functions
and a dilogarithm, with an elementary non-flat member at $k=-3$. In
summary, the solution consists of $q$ in \eqref{qclosed} and $\Phi$ in
\eqref{Phiseries} with \eqref{explicit_modes}, inserted into the metric
\eqref{metric}; its mass is \eqref{mass}. These
are exact finite-amplitude initial data. A ground-state identity determines the complete
physical parameter interval, and Fourier identification establishes the
global meaning of the series.

The distinct contribution is this solvable analytic realization and its
explicit normalizations, mass and critical parameter. The symmetry
reduction, general existence theory, the possibility of critical mass
blow-up and the formation of trapped surfaces near criticality
\cite{BeigOM} predate the present work. Explicit vacuum data with
non-trivial topology \cite{BrillLindquist,Misner} and analytic linearized
waves \cite{LinearBrill} have long been known. The present family differs in
being the first explicit real-analytic, fully
nonlinear vacuum data on $\R^3$ with a single asymptotically flat end.

Theorem~\ref{trapped} proves that these pure-radiation data contain
outer trapped surfaces and an apparent horizon for $k$ close to $k_c$,
with an explicit limiting profile \eqref{profilelimit}. The numerical
marginally trapped spheres indicate that trapped surfaces are present for
$k_h<k<k_c$, and they suggest a useful
benchmark for horizon finders, while the exact formulas provide benchmarks
for elliptic solvers of the kind used in the recent numerical
constructions \cite{BGW25,BGIMW26,AranhaOliveira,MastersonHobill}.
Remaining questions include an analytic lower bound on the range of $k$
with trapped surfaces, the global geometry of trapped surfaces,
the negative-$k$ branch, possible additional solvable profiles, and the
evolution of these initial data. The coordinate radius $a$ is the radius
of the distinguished circle in the auxiliary Euclidean coordinates;
it is not a proper size of the wave packet.

\section*{Code availability}
The Python scripts described in Appendix~\ref{app_verification} are
provided as supplementary material; they are included as ancillary files
with the arXiv version of this article.

\section*{Use of AI tools}
Large language model assistants, including Claude (Anthropic), were used
in the exploration that led to the solvable family, in symbolic and
numerical checks, and in drafting and revising the text. The author has
reviewed the results and the text and takes full responsibility for the
content of this article.

\appendix
\section{Asymptotic control used in the reconstruction}\label{app_asym}
Here are details of the decay step in Theorem~\ref{existence}. For the
variational solution $\psi\in\D\subset L^6$, let $f=V(1+\psi)$.
The analytic potential is bounded and $V=O(r^{-4})$. In particular
$V\in L^1\cap L^{6/5}$ and $V\psi\in L^1$ by H\"older's inequality.
Local elliptic estimates first give smoothness. Rescaling annuli
$r\sim R$ to fixed size gives a uniformly elliptic equation with potential
$R^2V=O(R^{-2})$ and source $R^2V=O(R^{-2})$. The scaled $L^6$ norm of
$\psi$ is $R^{-1/2}\|\psi\|_{L^6(r\sim R)}$; interior estimates therefore
give $\psi=o(1)$ uniformly at infinity. Thus $f=O(r^{-4})$.

The Newton potential
\begin{equation}\label{Newton}
 \psi(x)=\frac1{4\pi}\int_{\R^3}\frac{f(y)}{|x-y|}\,\dd^3y
\end{equation}
represents the solution: its difference from $\psi$ is an entire harmonic
function in $\D$ and hence zero. Splitting this integral into
$|y|<r/2$, $|y-x|<r/2$, and the remaining region gives $\psi=O(r^{-1})$.
By the symmetries already following from uniqueness, $f(y)=f(-y)$.
Pairing $y$ and $-y$ cancels the dipole term of the Newton kernel.
On $|y|<r/2$, the paired remainder is bounded by
$C|y|^2/r^3$; its integral against $|f|$ is $O(r^{-2})$ because
$\int_{|y|<r/2}|y|^2|f|\,\dd^3y=O(r)$.
The exterior contribution is also $O(r^{-2})$. Therefore
\begin{equation}
 \psi=\frac{M}{r}+O(r^{-2}),\qquad
 M=\frac1{4\pi}\int f\,\dd^3y=\frac m2.
\end{equation}
Rescaled interior estimates give the first two derivative bounds in
\eqref{AF}. For the mass proof only this $O_2(r^{-2})$ estimate is needed.

The next term of the expansion, used in \eqref{cusp}, requires one more
step. By \eqref{V_asym} and the estimate just proved,
\begin{equation}\label{sourceexpansion}
 f=V\Phi=\frac{4a^2w(0)}{r^4}+\frac{2a^2m\,w(0)}{r^5}+O(r^{-6}).
\end{equation}
Let $\chi_1$ be a smooth radial cutoff which vanishes near the origin and
equals one for $r\ge a$, and write $f=f_{\rm rad}+\tilde f$, where
$f_{\rm rad}$ is $\chi_1$ times the two displayed terms. The Newton
potential of the radial function $f_{\rm rad}$ is radial, and since
$\Delta_3r^{-2}=2r^{-4}$ and $\Delta_3r^{-3}=6r^{-5}$ it equals
$c_1/r-2a^2w(0)/r^2-a^2m\,w(0)/(3r^3)$ for $r\ge a$. The remainder
$\tilde f$ is bounded, even under $y\mapsto-y$, and $O(r^{-6})$, so
$\int|y|^2|\tilde f|\,\dd^3y<\infty$. Repeating the multipole argument
above with one more order, its dipole term vanishes and its Newton
potential is $c_2/r+O(r^{-3})$. Hence, with $c_1+c_2=m/2$,
\begin{equation}\label{nextorder}
 \Phi=1+\frac m{2r}-\frac{2a^2w(0)}{r^2}+O(r^{-3}).
\end{equation}

\section{The energy boundary term in toroidal coordinates}
For clarity, a coordinate integration by parts must retain the circle
boundary term. If $\eta$ is axisymmetric and sufficiently regular,
with the decay of a function in the preceding arguments, and
$G=\eta/\sqrt d$, then
\begin{equation}\label{boundaryenergy}
 \begin{split}
 E_k[\eta]={}&a\int
 \left[G_\nu^2+G_u^2-\left(\frac14+w\right)G^2\right]
 \sinh\nu\,\dd\nu\dd u\dd\varphi\\
 &+2\pi^2a\,\eta(a,0)^2.
 \end{split}
\end{equation}
The final term comes from $\nu\to\infty$, which is a coordinate
boundary, not a physical boundary. For a general non-axisymmetric
function the bulk integrand additionally contains
$G_\varphi^2/\sinh^2\nu$, and the final term becomes
$\pi a\int_0^{2\pi}\eta(a\cos\varphi,a\sin\varphi,0)^2\dd\varphi$.
The Cartesian identity \eqref{groundstate} avoids this bookkeeping.

\section{Verification and reproducibility}\label{app_verification}
The accompanying Python files implement the analytic profile, the mode
functions, the mass, and the restricted shooting calculation; they
reproduce every number, table and figure of this article.
High-precision checks of \eqref{modeeq} and $q''/4=w$ were performed for
$n=0,1,3$, $\sigma=0.1,0.4,0.8$, and
$k=0.3,-1,0.78,-10$. With 45-decimal arithmetic, absolute residuals were
below $10^{-40}$. The closed form \eqref{qclosed} agrees with a direct
quadrature of \eqref{qprime} to 30 digits for $k=0.3,0.78,-1,-10$;
the elementary case \eqref{qminus3} and the circle value \eqref{qinf}
were checked in the same way.
The auxiliary-function identity \eqref{auxeq} was
checked at the same points, and \eqref{trialenergy} and
\eqref{boundaryenergy} by quadrature for $k=0.3,-1,0.78$. The convergence study of
Table~\ref{tableconvergence} and the sign of the mean curvature of the
displaced surfaces are produced by separate scripts. The constraint
itself was tested with the series \eqref{Phiseries}: the sum
$\Delta_3\Phi+V\Phi$, computed by finite differences in cylindrical
coordinates, is of order $10^{-11}$ at four points of the $(\rho,z)$
plane for $k=0.3$ and $k=-1$, while the two terms separately are of order
$10^{-2}$ to $1$. These checks supplement the derivations;
they do not replace the existence and reconstruction proofs.

The mass table uses \eqref{masstail} with $N=2000$. The expansion
\eqref{tail} is verified symbolically, and
\eqref{masspole} numerically: $m/a-8/[\pi(1-S)]$ tends to
$-6.96$ as $k\uparrow k_c$, with $S'(k_c)=4.047659$.
The first-order fields \eqref{phiweak} are checked symbolically against the
Poisson equation, and the coefficient in \eqref{weakmass} follows from
Beta integrals; independently, $[m(k)+m(-k)]/(2k^2)=0.3141603$ and
$0.31415928$ for $k=10^{-3}$ and $10^{-4}$, against
$\pi/10=0.31415927$. The ingredients of Section~\ref{sec_proof} are checked by a separate
script: the identity \eqref{zeromode} to 25 digits, positivity of the
modes and the comparison bound of Lemma~\ref{modebounds} on a sample of
$n\le40$, the two forms of \eqref{Tdef} and the identity
\eqref{spheresign} to 12 digits, and the values \eqref{sigmastar}. At
three points off the axis $\Phi_k-A_kh_k$ tends to finite limits, for
example $0.4661$ at $(\rho,z)=(2a,a)$, while $A_k$ grows from $23$ at
$k_c-k=10^{-2}$ to $2.2\times10^4$ at $k_c-k=10^{-5}$. Figure~\ref{figfields} is computed from the mode tables; at ten test
points for each of its two values of $k$, including points at distance
$10^{-3}a$ from the circle, the mapped values agree with the direct
summation of \eqref{Phiseries} to a relative accuracy of $10^{-7}$, and
they approach \eqref{Phicircle} at the circle. The other figures are generated from the analytic profiles,
the mass formula and the shooting calculation. No claim is made that a
finite set of numerical residuals certifies the entire continuum solution.

\end{document}